\newtheorem{proposition}{Proposition}
\newtheorem{definition}{Definition}
\newtheorem{theorem}{Theorem}
\newtheorem{lemma}{Lemma}
\newtheorem{example}{Example}
\renewcommand*\env@matrix[1][c]{\hskip -\arraycolsep
  \let\@ifnextchar\new@ifnextchar
  \array{*\c@MaxMatrixCols #1}}
\title{On the Sphere Decoding Complexity of STBCs for Asymmetric MIMO Systems}
\author{
\authorblockN{Lakshmi Prasad Natarajan, K.~Pavan Srinath and B.~Sundar Rajan}
\authorblockA{Dept. of ECE, IISc, Bangalore 560012, India\\
Email: \{nlp,pavan,bsrajan\}@ece.iisc.ernet.in\\}
}
\date{\today}
\begin{document}
\maketitle
\thispagestyle{empty}

\begin{abstract}
In the landmark paper~\cite{HaH} by Hassibi and Hochwald, it is claimed without proof that the upper triangular matrix $\textbf{R}$ encountered during the sphere decoding of any linear dispersion code is full-ranked whenever the rate of the code is less than the minimum of the number of transmit and receive antennas. In this paper, we show that this claim is true only when the number of receive antennas is at least as much as the number of transmit antennas. We also show that all known families of high rate (rate greater than $1$ complex symbol per channel use) multigroup ML decodable codes have rank-deficient $\textbf{R}$ matrix even when the criterion on rate is satisfied, and that this rank-deficiency problem arises only in asymmetric MIMO with number of receive antennas less than the number of transmit antennas. Unlike the codes with full-rank $\textbf{R}$ matrix, the average sphere decoding complexity of the STBCs whose $\textbf{R}$ matrix is rank-deficient is polynomial in the constellation size. We derive the sphere decoding complexity of most of the known high rate multigroup ML decodable codes and show that for each code, the complexity is a decreasing function of the number of receive antennas.
\end{abstract}

\section{Introduction} \label{sec1}

\subsection{System Model and Definitions}

We consider Space-Time Block Codes (STBCs) for an $N$ transmit antenna, $M$ receive antenna, quasi-static MIMO channel ($N \times M$ MIMO system) with Rayleigh flat fading. The system can be modeled as 
\begin{equation}\label{sys_model}
 \textbf{Y = XH + W},
\end{equation}
where $\textbf{X}$ is the \mbox{$T \times N$} codeword matrix transmitted over $T$ channel uses, $\textbf{Y}$ is the \mbox{$T \times M$} received matrix, $\textbf{H}$ is the \mbox{$N \times M$} channel matrix and the \mbox{$T \times M$} matrix $\textbf{W}$ is the additive noise at the receiver. The entries of $\textbf{H}$ and $\textbf{W}$ are i.i.d. zero mean, unit variance, circularly symmetric complex Gaussian random variables. 

\begin{definition}
(STBC) An STBC $\mathcal{C}$ encoding $K$ real independent information symbols, denoted by $x_i$, $i=1,\cdots,K$, is a set of complex matrices given by 
\begin{equation}\label{stbc}
\mathcal{C}=\left\{ \sum_{i=1}^{K}{x_i\textbf{A}_i} \bigg|[x_1,\dots,x_K]^T \in \mathcal{A} \right\},
\end{equation}
where the $T \times N$ complex matrices \mbox{$\textbf{A}_1,\dots,\textbf{A}_K$}, which are called \emph{linear dispersion} or \emph{weight matrices}, are linearly independent over the real field $\mathbb{R}$~\cite{HaH},~\cite{RaR}, and the finite set \mbox{$\mathcal{A} \subset \mathbb{R}^K$} is called the \emph{signal set}.
\end{definition}

\begin{definition}
 (Code Rate) The rate of an STBC is the average number of information symbols transmitted in each channel use. For the STBC given by \eqref{stbc}, the code rate is \mbox{$R=\frac{K}{T}$} real symbols per channel use, or \mbox{$R=\frac{K}{2T}$} complex symbols per channel use (cspcu).
\end{definition}

The linear independence of the weight matrices in the definition of an STBC implies that $R \leq N$. {\it Throughout this paper, unless otherwise specified, the code rate is taken to be in terms of complex symbols per channel use.}

Generally, the signal set $\mathcal{A}$ is chosen in such a way that the STBC $\mathcal{C}$ has full-diversity and large coding gain. In most cases $\mathcal{A}$ is chosen to be a subset of ${\bf \Theta}\mathbb{Z}^K$, where ${\bf \Theta} \in \mathbb{R}^{K \times K}$ is a full-ranked matrix. One such instance is when the symbols are partitioned into multiple encoding groups, and each group of symbols is encoded independently of other groups using a lattice constellation, such as in Clifford Unitary Weight Designs~\cite{RaR} (in which case ${\bf \Theta}$ is an orthogonal matrix), Quasi-Orthogonal STBCs~\cite{WWX} and Coordinate Interleaved Orthogonal Designs~\cite{KhR}. There are also instances where the real symbols are encoded independently using regular PAM constellations of possibly different minimum distances \cite{NaR2}, in which case ${\bf \Theta}$ is a diagonal matrix with positive entries.

For a complex matrix $\textbf{A}$, let its real and imaginary components be denoted by $\textbf{A}_I$ and $\textbf{A}_Q$, respectively. Let $vec(\textbf{A})$ denote the complex vector obtained by stacking the columns of $\textbf{A}$ one below the other and
\begin{equation*}
\widetilde{vec}(\textbf{A}) \triangleq [vec(\textbf{A}_I)^T ~ vec(\textbf{A}_Q)^T]^T. 
\end{equation*}
Now, the system model given by \eqref{sys_model} can be expressed as 
\begin{equation*}
\textbf{y} = \widetilde{vec}(\textbf{Y}) = \textbf{Gx} + \textbf{w},
\end{equation*}
 where \mbox{$\textbf{x} = [x_1,\dots,x_K]^T$}, \mbox{$\textbf{w}=\widetilde{vec}(\textbf{W})$} and the \emph{equivalent channel matrix} $\textbf{G} \in \mathbb{R}^{2MT \times K}$ is given by
\begin{equation*}
	\textbf{G} = [\widetilde{vec}(\textbf{A}_1\textbf{H})~\widetilde{vec}(\textbf{A}_2\textbf{H}) ~ \cdots ~ \widetilde{vec}(\textbf{A}_K\textbf{H})] .
\end{equation*}
Consider the vector of transformed information symbols \mbox{$\textbf{s} = {\bf \Theta}^{-1}\textbf{x}$} which takes values from \mbox{$\mathcal{A}'={\bf \Theta}^{-1}\mathcal{A} \subset \mathbb{Z}^K$}, where $\mathbb{Z}$ denotes the ring of integers. The components of $\textbf{s}$ take finite integer values, i.e., $s_i \in \mathcal{A}_q \subset \mathbb{Z}$, with $\vert \mathcal{A}_q \vert =q$ for some finite $q$. Hence one can use a sphere decoder \cite{ViB} to decode $\textbf{s}$ and then obtain the ML estimate of the information vector $\textbf{x}$. The ML decoder output is given by
\begin{equation} \label{eq:ML_decoder}
\check{\textbf{s}} = \operatorname{arg}~\min_{\textbf{s} \in \mathcal{A}'} \left \Vert \textbf{y}-\textbf{G}{\bf \Theta}  \textbf{s}\right \Vert_F^2,
\end{equation}
where $||\cdot||_F$ denotes the Frobenius norm of a matrix.

\subsection{Motivation for our results}

It is claimed in~\cite{HaH} without proof that \mbox{$R \leq min\{M,N\}$} is a sufficient condition for the system of equations defined by \eqref{eq:ML_decoder} to be \emph{not underdetermined}, i.e., for
\begin{equation*}
 rank(\textbf{G})= rank(\textbf{G}{\bf \Theta}) = K, ~~\textrm{with} ~~ K \leq 2MT.
\end{equation*}
In Section \ref{sec2}, we show that the claim made in~\cite{HaH} is true only for  \mbox{$M \geq N$}. This observation is the gateway to the new results presented from Section~\ref{sec2} onwards. 

For a system where \mbox{$rank(\textbf{G})=K$}, the sphere decoder complexity, averaged over noise and channel realizations, is {\it independent of the constellation size $q$} and is roughly polynomial in the dimension of the sphere decoding search \cite{ViB}, \cite{DCB}, \cite{PKY}. However, if the rank of $\textbf{G}$ is less than $K$, the average sphere decoding complexity is no more independent of the constellation size. When $rank(\textbf{G}) = K'<K$, the conventional sphere decoder needs to be modified as follows~\cite{DEC}: The $\textbf{R}$ matrix resulting from the $\textbf{QR}$-decomposition of $\textbf{G}{\bf \Theta}$ has the form \mbox{$\textbf{R}=[\textbf{R}_a~\textbf{R}_b] \in \mathbb{R}^{K' \times K}$}, where $\textbf{R}_a$ is a $K' \times K'$ upper triangular full-rank matrix. There is a corresponding partition of $\textbf{s}$ as $[\textbf{s}_a^T ~\textbf{s}_b^T]^T$. If $q$ is the size of the regular PAM constellation used, then for each of the $q^{K-K'}$ values of $\textbf{s}_b$, the conditionally optimal estimate of $\textbf{s}_a$ can be found by first removing the interference from $\textbf{s}_b$, and then using a sphere decoder with the upper triangular matrix $\textbf{R}_a$ to obtain an estimate of $\textbf{s}_a$. Then, from the resulting $q^{K-K'}$ estimates of $\textbf{s}$, the optimal vector is chosen. This observation leads to the following lemma. 

\begin{lemma}\label{singular_thm}
For an STBC whose equivalent channel matrix $\textbf{G}$ is such that $rank(\textbf{G}) = K'< K$, the sphere decoding complexity as a function of $q$ is of the order of $q^{K-K'}$, i.e., $\mathcal{O}\left(q^{K-K'}\right)$.
\end{lemma}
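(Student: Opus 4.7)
The plan is to follow the modified sphere decoder construction that is sketched in the paragraph immediately preceding the lemma, and simply account for its cost. The starting point is the QR decomposition of the rank-deficient equivalent channel matrix $\textbf{G}\boldsymbol{\Theta}$, which produces $\textbf{R} = [\textbf{R}_a ~ \textbf{R}_b] \in \mathbb{R}^{K'\times K}$ with $\textbf{R}_a$ a $K' \times K'$ full-rank upper triangular matrix. After the corresponding partition $\textbf{s} = [\textbf{s}_a^T ~ \textbf{s}_b^T]^T$, the ML metric in \eqref{eq:ML_decoder} can be rewritten, after applying the unitary $\textbf{Q}^T$ to $\textbf{y}$, as $\|\textbf{y}' - \textbf{R}_a\textbf{s}_a - \textbf{R}_b\textbf{s}_b\|^2$ plus a term depending only on the discarded rows; here $\textbf{y}'$ denotes the first $K'$ entries of $\textbf{Q}^T\textbf{y}$.

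Next, I would fix each candidate $\textbf{s}_b \in \mathcal{A}_q^{K-K'}$ in turn and observe that, conditioned on $\textbf{s}_b$, the residual system $\textbf{y}' - \textbf{R}_b\textbf{s}_b = \textbf{R}_a\textbf{s}_a + \textrm{noise}$ is a standard sphere decoding instance with a $K' \times K'$ full-rank upper triangular matrix $\textbf{R}_a$. By the assumption (restated in the motivation subsection and supported by \cite{ViB}, \cite{DCB}, \cite{PKY}) that the average sphere decoding complexity over noise and channel realizations is independent of the constellation size $q$ whenever the triangular matrix is full-rank, the inner search for each $\textbf{s}_b$ costs a quantity $C(K')$ that depends only on $K'$ and not on $q$.

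Finally, multiplying the per-candidate cost $C(K')$ by the number $q^{K-K'}$ of enumerated $\textbf{s}_b$ vectors, and minimizing across them, gives an overall average complexity of $q^{K-K'}\cdot C(K') = \mathcal{O}(q^{K-K'})$ as a function of $q$, which is the stated bound.

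The main conceptual obstacle is the justification that the inner, full-rank sphere decoder genuinely has $q$-independent average complexity; this is not re-proved here but is invoked as the established result from \cite{ViB}, \cite{DCB}, \cite{PKY}. A minor bookkeeping issue is ensuring that the outer enumeration over $\textbf{s}_b$ can indeed be restricted to the projection of $\mathcal{A}'$ onto its last $K-K'$ coordinates, which has cardinality at most $q^{K-K'}$ since each $s_i$ lies in the size-$q$ set $\mathcal{A}_q$; this is immediate from the product structure of $\mathcal{A}' \subseteq \mathcal{A}_q^K$.
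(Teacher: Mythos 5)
Your proposal is correct and follows essentially the same route as the paper: the lemma in the paper is justified precisely by the preceding paragraph's construction (QR decomposition of $\textbf{G}{\bf \Theta}$ into $[\textbf{R}_a~\textbf{R}_b]$, enumeration of the $q^{K-K'}$ values of $\textbf{s}_b$, and a $q$-independent full-rank sphere search for $\textbf{s}_a$ conditioned on each), which you have simply made explicit with the cost accounting. Your invocation of \cite{ViB}, \cite{DCB}, \cite{PKY} for the $q$-independence of the inner full-rank search matches the paper's own reliance on those references.
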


In the rest of the paper, by sphere decoding complexity we mean the average sphere decoding complexity and the focus is on {\it the dependence of the complexity on the constellation size $q$ and not on the dimension of the sphere decoding search}. We now introduce the notion of \emph{singularity} of an STBC which is a direct indicator of its sphere decoding complexity.
\begin{definition}\label{singular}
Let $\mathcal{C}$ be an STBC with rate $R = \frac{K}{2T}$ and let \mbox{$M \geq R$}. We say that $\mathcal{C}$ is \emph{singular} for $M$ receive antennas if \mbox{$rank(\textbf{G}) < K $} with probability $1$. Otherwise, $\mathcal{C}$ is said to be \emph{non-singular} for $M$ receive antennas. 
\end{definition}

To illustrate the effect of the rank of $\textbf{G}$ and singularity of an STBC on the sphere decoding complexity as a function of the constellation size, consider the following example.

\begin{example} \label{ex:FGD}
In~\cite{RGYS}, a rate $R=\frac{17}{8}$ code for $N=4$ antennas with $K=17$ real symbols was presented. Define $\textbf{X}=\begin{bmatrix} 0 & 1 \\ 1 & 0  \end{bmatrix}$ and $\textbf{Z}=\begin{bmatrix} 0 & 1 \\ 1 & 0  \end{bmatrix}$, and for any two matrices $\textbf{A}$, $\textbf{B}$, denote their Kronecker product  by \mbox{$\textbf{A} \otimes \textbf{B}$}. Then, the $17$ weight matrices \mbox{$\textbf{A}_1,\dots,\textbf{A}_{17}$} of the STBC in~\cite{RGYS} are as follows: 
\begin{equation*}
\begin{array}{lll}
\textbf{A}_1 = \textbf{I}_2 \otimes \textbf{I}_2, & \textbf{A}_2 = i\textbf{Z} \otimes \textbf{I}_2, & \textbf{A}_3 = \textbf{Z}\textbf{X} \otimes \textbf{I}_2, \\
\textbf{A}_4 = i\textbf{X} \otimes \textbf{Z}, & \textbf{A}_5 = \textbf{X} \otimes \textbf{Z}\textbf{X}, & \textbf{A}_6 = i\textbf{X} \otimes \textbf{X}, \\
\textbf{A}_7 = \textbf{I}_2 \otimes \textbf{Z}\textbf{X}, & \textbf{A}_8 = i\textbf{Z} \otimes \textbf{X}, & \textbf{A}_9 = \textbf{Z}\textbf{X} \otimes \textbf{X}, \\
\textbf{A}_{10} = i\textbf{Z}\textbf{X} \otimes \textbf{Z}\textbf{X}, & \textbf{A}_{11} = \textbf{Z}\textbf{X} \otimes \textbf{X}, &\textbf{A}_{12} = i\textbf{X} \otimes \textbf{I}_2, \\
\textbf{A}_{13} = \textbf{Z} \otimes \textbf{Z}\textbf{X}, & \textbf{A}_{14} = i\textbf{I}_2 \otimes \textbf{X}, &\textbf{A}_{15} = i\textbf{Z} \otimes \textbf{Z}, \\
\textbf{A}_{16} = i\textbf{I}_2 \otimes \textbf{I}_2~~and &\textbf{A}_{17} = i\textbf{I}_2 \otimes \textbf{Z}.
\end{array}
\end{equation*}
Let the number of receive antennas be $M=3$. Then, \mbox{$M>R$} and the equivalent channel matrix $\textbf{G}$ is of size \mbox{$24 \times 17$}. Now consider the following randomly generated channel matrix 

{\footnotesize
\begin{equation*}
\textbf{H} = \begin{bmatrix} [r]
   0.3457 + 0.2299i &  0.2078 - 0.0723i & -0.7558 - 0.6116i \\
   0.7316 - 0.5338i & -0.5567 - 0.1707i & -0.5724 - 0.0212i \\
   0.5140 + 0.9689i &  0.6282 + 0.2257i & -2.0819 - 0.1166i \\
  -0.2146 - 1.2102i & -0.8111 + 0.2212i &  1.0171 + 0.4439i
\end{bmatrix}.
\end{equation*}
}

\noindent
The resulting $\textbf{G}$ matrix has rank only $16$ and the structure of the upper triangular matrix $\textbf{R}$ of size \mbox{$24 \times 17$} obtained upon its $\textbf{QR}$-decomposition is shown in~\eqref{eq:R_FGD_code} at the top of next page. The non-zero entries of $\textbf{R}$ are denoted by `$a$'. It is clear that the first $16$ columns of $\textbf{R}$ are linearly independent and the last column lies in the span of the first $16$ columns. Removing the last $8$ rows of $\textbf{R}$, which are all zeros, we get the \mbox{$16 \times 17$} real matrix $\textbf{R}^\prime$ which is used by the sphere decoder to find the ML estimate of the information vector. In this case, from Lemma \ref{singular_thm}, the complexity of sphere decoding this STBC for this particular channel realization $\textbf{H}$ is of the order of \mbox{$q^{17-16}=q$}. In Section~\ref{sec3a}, by establishing that \mbox{$rank(\textbf{G})=16$} with probability $1$, we show that this STBC is singular for $M=3$. 
\begin{figure*}
\begin{equation} \label{eq:R_FGD_code}
\textbf{R} = \begin{bmatrix}
a&0&0&0&0&0&0&0&0&0&0&0&0&0&0&0&0\\
0&a&0&0&0&0&a&a&0&0&0&0&a&a&a&a&a\\
0&0&a&0&0&0&a&0&a&a&a&0&0&a&0&a&a\\
0&0&0&a&0&0&0&a&a&a&0&a&a&0&0&a&a\\
0&0&0&0&a&0&a&a&a&0&a&a&0&0&a&a&0\\
0&0&0&0&0&a&0&0&0&a&a&a&a&a&a&a&0\\
0&0&0&0&0&0&a&a&a&a&a&a&a&a&a&a&a\\
0&0&0&0&0&0&0&a&a&a&a&a&a&a&a&a&a\\
0&0&0&0&0&0&0&0&a&a&a&a&a&a&a&a&a\\
0&0&0&0&0&0&0&0&0&a&a&a&a&a&a&a&a\\
0&0&0&0&0&0&0&0&0&0&a&a&a&a&a&a&a\\
0&0&0&0&0&0&0&0&0&0&0&a&a&a&a&a&a\\
0&0&0&0&0&0&0&0&0&0&0&0&a&a&a&a&a\\
0&0&0&0&0&0&0&0&0&0&0&0&0&a&a&a&a\\
0&0&0&0&0&0&0&0&0&0&0&0&0&0&a&a&a\\
0&0&0&0&0&0&0&0&0&0&0&0&0&0&0&a&a\\
0&0&0&0&0&0&0&0&0&0&0&0&0&0&0&0&0\\ 
0&0&0&0&0&0&0&0&0&0&0&0&0&0&0&0&0\\
0&0&0&0&0&0&0&0&0&0&0&0&0&0&0&0&0\\
0&0&0&0&0&0&0&0&0&0&0&0&0&0&0&0&0\\
0&0&0&0&0&0&0&0&0&0&0&0&0&0&0&0&0\\
0&0&0&0&0&0&0&0&0&0&0&0&0&0&0&0&0\\
0&0&0&0&0&0&0&0&0&0&0&0&0&0&0&0&0\\
0&0&0&0&0&0&0&0&0&0&0&0&0&0&0&0&0\\
\end{bmatrix}
\end{equation}
\hrule
\end{figure*}
\end{example}

\subsection{Contributions of the Paper}
The contributions and organization of this paper are as follows.
\begin{itemize}

\item We introduce the notion of singularity of an STBC which is a direct indicator of its sphere decoding complexity. 

\item We show that contrary to the claim made in~\cite{HaH}, \mbox{$R \leq min\{M,N\}$} is not a sufficient condition for an STBC to be non-singular for $M$ receive antennas (Theorem~\ref{th:main_theorem}, Section~\ref{sec2}). We show that the case of singular STBCs arises only for asymmetric MIMO systems with  $M < N$ (Proposition~\ref{pr:M_less_than_N}, Section~\ref{sec2}).

\item We show that all known families of high rate \mbox{($R>1$)} multigroup ML decodable\footnote{An STBC is $g$-group or multigroup ML decodable if its symbols can be partitioned into $g$ groups and each group of symbols can be ML decoded independent of others.}  codes \cite{RGYS}, \cite{YGT}, \cite{RYGGZ}, \cite{SrR1} and \cite{NaR1} are non-singular for certain values of $M$ (Section~\ref{sec3}, see Table~\ref{tbl:comparison} for a summary of results). 

\item We derive the sphere decoding complexity of almost all known high-rate multigroup ML decodable codes and show that in each case the sphere decoding complexity is a decreasing function of the number of receive antennas $M$ (Section~\ref{sec3}, see Table~\ref{tbl:comparison}). We show that even when an STBC is singular, multigroup ML decodability helps reduce the sphere decoding complexity. The reduction in complexity is from $\mathcal{O}\left( q^{K-K'} \right)$ to $\mathcal{O}\left( q^{\frac{K-K'}{g}} \right)$, where $g$ is the number of ML decoding groups (Section~\ref{sec3}).

\end{itemize}

Some related open problems are discussed in Section~\ref{sec4}.

{\sf \bf Notations:} Throughout the paper, matrices (vectors) are denoted in bold, uppercase (lowercase) letters. For a complex matrix $\textbf{A}$, its transpose, conjugate and conjugate-transpose are denoted by $\textbf{A}^T$, $\bar{\textbf{A}}$ and $\textbf{A}^H$, respectively. For a square matrix $\textbf{A}$, $det(\textbf{A})$ denotes its determinant. The $n \times n$ identity matrix is denoted by $\textbf{I}_n$ and ${\bf 0}$ is the null matrix of appropriate dimension. Unless used as a subscript or to denote indices, $i = \sqrt{-1}$. For square matrices $\textbf{A}_j$, $j=1,\dots,d$, $diag(\textbf{A}_1,\dots,\textbf{A}_d)$ denotes the square, block-diagonal matrix with $\textbf{A}_1,\dots,\textbf{A}_d$ on the diagonal, in that order. The field of complex numbers is denoted by $\mathbb{C}$. 

\section{Basic results on the rank of the equivalent channel matrix} \label{sec2}

We present a few results which we will use in the following sections to derive the rank of $\textbf{G}$ for specific STBCs. The following result shows that if any STBC is singular for $M$ receive antennas, then $M<N$. Thus, the rank-deficiency problem arises only in asymmetric MIMO with \mbox{$M < N$}.
\begin{proposition} \label{pr:M_less_than_N}
If $M \geq N$, every $T \times N$ STBC is non-singular for $M$ receive antennas.
\end{proposition}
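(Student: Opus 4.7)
The plan is to reduce the event $\text{rank}(\textbf{G}) < K$ to a much smaller event involving only $\textbf{H}$, namely the event $\text{rank}(\textbf{H}) < N$, which manifestly has probability zero when $M \geq N$.

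First, I would rewrite the condition $\text{rank}(\textbf{G}) < K$ in a structural form. By definition of $\textbf{G}$, a non-trivial real linear dependence among its columns means there exist $c_1,\dots,c_K \in \mathbb{R}$, not all zero, with
\begin{equation*}
\sum_{i=1}^{K} c_i \, \widetilde{vec}(\textbf{A}_i \textbf{H}) \;=\; \textbf{0}.
\end{equation*}
Since $\widetilde{vec}(\cdot)$ separates real and imaginary parts of a complex matrix, this is equivalent to the complex matrix equation
\begin{equation*}
\textbf{B}\textbf{H} \;=\; \textbf{0}, \qquad \text{where } \textbf{B} \;\triangleq\; \sum_{i=1}^{K} c_i \textbf{A}_i.
\end{equation*}
The linear independence of the weight matrices $\textbf{A}_1,\dots,\textbf{A}_K$ over $\mathbb{R}$ guarantees $\textbf{B} \neq \textbf{0}$ whenever $(c_1,\dots,c_K) \neq \textbf{0}$. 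So $\text{rank}(\textbf{G}) < K$ implies the existence of some non-zero $T \times N$ complex matrix $\textbf{B}$ in the real span of the weight matrices with $\textbf{B}\textbf{H} = \textbf{0}$.

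Next, I would exploit the hypothesis $M \geq N$. The $N \times M$ matrix $\textbf{H}$ has i.i.d.\ complex Gaussian entries, so its entries have a joint density with respect to Lebesgue measure on $\mathbb{R}^{2NM}$. The condition $\text{rank}(\textbf{H}) < N$ is equivalent to the vanishing of all $N \times N$ minors of $\textbf{H}$, which is a polynomial condition in the real and imaginary parts of the entries; since this polynomial system is non-trivial precisely when $M \geq N$, its zero set has Lebesgue measure zero. Hence $\text{rank}(\textbf{H}) = N$ with probability $1$, and consequently $\textbf{H}$ admits a right inverse $\textbf{H}^{+} = \textbf{H}^{H}(\textbf{H}\textbf{H}^{H})^{-1}$ satisfying $\textbf{H}\textbf{H}^{+} = \textbf{I}_N$, almost surely.

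Post-multiplying $\textbf{B}\textbf{H} = \textbf{0}$ by $\textbf{H}^{+}$ then forces $\textbf{B} = \textbf{0}$, contradicting $\textbf{B} \neq \textbf{0}$. Therefore the event $\{\text{rank}(\textbf{G}) < K\}$ is contained (up to a null set) in the event $\{\text{rank}(\textbf{H}) < N\}$, which has probability zero. Since $M \geq R$ guarantees $K \leq 2MT$, the rank bound $\text{rank}(\textbf{G}) = K$ is attainable, and we conclude that $\mathcal{C}$ is non-singular for $M$ receive antennas. The only real subtlety is the direction of the implication in the probabilistic statement: one must reduce rank-deficiency of $\textbf{G}$ to a single almost-sure event on $\textbf{H}$ (rather than union-bounding over uncountably many choices of $(c_i)$), and the right-inverse argument accomplishes this cleanly.
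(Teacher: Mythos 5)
Your argument is correct and is essentially the paper's own proof: both reduce rank-deficiency of $\textbf{G}$ to the almost-sure event that $\textbf{H}$ has rank $N$, then use a right inverse of $\textbf{H}$ (the paper's $\textbf{H}^\dag$, your $\textbf{H}^{+}$) to turn $\textbf{B}\textbf{H}=\textbf{0}$ into $\textbf{B}=\textbf{0}$, contradicting the $\mathbb{R}$-linear independence of the weight matrices. Your explicit justification that $\mathrm{rank}(\textbf{H})=N$ almost surely (via the nonvanishing of a polynomial in the entries) simply spells out what the paper delegates to a citation.
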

\begin{proof}
Since $\widetilde{vec}(\cdot)$ is an isomorphism from the $\mathbb{R}$-vector space $\mathbb{C}^{T \times M}$ to $\mathbb{R}^{2MT}$, it is enough to show that $\textbf{A}_1\textbf{H},\dots,\textbf{A}_K\textbf{H}$ are linearly independent with probability $1$. Suppose $\textbf{H}$ is full-ranked, i.e., $rank(\textbf{H})=N$, then there exists a matrix $\textbf{H}^\dag \in \mathbb{C}^{M \times N}$ such that $\textbf{H}\textbf{H}^\dag = \textbf{I}_N$. If $\textbf{V}=\sum_{i=1}^{K}{a_i\textbf{A}_i\textbf{H}}={\bf 0}$, it would mean that $\textbf{V}\textbf{H}^\dag = \sum_{i=1}^{K}a_i\textbf{A}_i = {\bf 0}$. Since $\textbf{A}_i$ are the weight matrices of an STBC, they are linearly independent and hence $a_i=0$, $i=1,\dots,K$. Thus $rank(\textbf{G})=K$ if $\textbf{H}$ is full-ranked. Since $\textbf{H}$ is full-ranked with probability $1$~\cite{TsV}, we have shown that any STBC is non-singular for $M$ receive antennas if $M \geq N$.
\end{proof}

Let \mbox{$\langle \textbf{A}_1,\textbf{A}_2,\dots,\textbf{A}_K \rangle$} denote the $\mathbb{R}$-linear subspace of $\mathbb{C}^{T \times N}$ spanned by the matrices \mbox{$\textbf{A}_1,\dots,\textbf{A}_K$}. 

\begin{proposition}	
	\label{pr:rank_independent_basis}
Let $\textbf{B}_1,\dots,\textbf{B}_K$ be $T \times N$ complex matrices such that \mbox{$\langle \textbf{A}_1,\dots,\textbf{A}_K \rangle = \langle \textbf{B}_1,\dots,\textbf{B}_K \rangle$} and let $\textbf{H} \in \mathbb{C}^{N \times M}$ be any matrix. Then the column spaces of the matrices
\begin{align}
	&\textbf{G}_\textbf{A}(\textbf{H}) = [\widetilde{vec}(\textbf{A}_1\textbf{H})~\widetilde{vec}(\textbf{A}_2\textbf{H}) ~ \cdots ~ \widetilde{vec}(\textbf{A}_K\textbf{H})] \textrm{ and} \label{eq:GA_of_H}\\
	&\textbf{G}_\textbf{B}(\textbf{H}) = [\widetilde{vec}(\textbf{B}_1\textbf{H})~\widetilde{vec}(\textbf{B}_2\textbf{H}) ~ \cdots ~ \widetilde{vec}(\textbf{B}_K\textbf{H})] \label{eq:GB_of_H}
\end{align}
are identical. In particular, \mbox{$rank(\textbf{G}_\textbf{A}(\textbf{H})) = rank(\textbf{G}_\textbf{B}(\textbf{H}))$}.
\end{proposition}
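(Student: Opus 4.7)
The plan is to exploit the $\mathbb{R}$-linearity of the three operations in play: forming $\mathbb{R}$-linear combinations of matrices, right multiplication by the fixed matrix $\textbf{H}$, and the vectorization map $\widetilde{vec}(\cdot)$. Since $\widetilde{vec}(\cdot)$ acts on real and imaginary parts separately, it is an $\mathbb{R}$-linear isomorphism from $\mathbb{C}^{T \times M}$ (viewed as an $\mathbb{R}$-vector space) to $\mathbb{R}^{2MT}$; right multiplication by $\textbf{H}$ is likewise $\mathbb{R}$-linear. The composition of $\mathbb{R}$-linear maps is $\mathbb{R}$-linear, so any $\mathbb{R}$-linear relation among the $\textbf{A}_i$'s transfers directly to an $\mathbb{R}$-linear relation among the columns of $\textbf{G}_\textbf{A}(\textbf{H})$.

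Concretely, I would first use the hypothesis \mbox{$\langle \textbf{A}_1,\dots,\textbf{A}_K \rangle = \langle \textbf{B}_1,\dots,\textbf{B}_K \rangle$} to write, for each $j \in \{1,\dots,K\}$,
\begin{equation*}
\textbf{B}_j = \sum_{i=1}^{K} c_{ij} \textbf{A}_i, \quad c_{ij} \in \mathbb{R}.
\end{equation*}
Right-multiplying by $\textbf{H}$ gives $\textbf{B}_j \textbf{H} = \sum_{i=1}^K c_{ij} \textbf{A}_i \textbf{H}$, and applying $\widetilde{vec}(\cdot)$, which commutes with $\mathbb{R}$-linear combinations, yields $\widetilde{vec}(\textbf{B}_j \textbf{H}) = \sum_{i=1}^K c_{ij}\, \widetilde{vec}(\textbf{A}_i \textbf{H})$. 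Collecting these identities for $j=1,\dots,K$ into a single matrix equation gives $\textbf{G}_\textbf{B}(\textbf{H}) = \textbf{G}_\textbf{A}(\textbf{H})\, \textbf{C}$, where $\textbf{C} = [c_{ij}] \in \mathbb{R}^{K \times K}$. Hence every column of $\textbf{G}_\textbf{B}(\textbf{H})$ lies in the column space of $\textbf{G}_\textbf{A}(\textbf{H})$.

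The reverse inclusion is obtained by reversing the roles of the $\textbf{A}_i$ and $\textbf{B}_i$, which is legitimate because the hypothesis on the spans is symmetric in the two sets of matrices. The two column spaces therefore coincide, and the rank equality \mbox{$rank(\textbf{G}_\textbf{A}(\textbf{H})) = rank(\textbf{G}_\textbf{B}(\textbf{H}))$} follows immediately.

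There is no genuine obstacle here; the only thing worth highlighting is the insistence on $\mathbb{R}$-linearity rather than $\mathbb{C}$-linearity. Because the scalars $c_{ij}$ are real, they pass unchanged through $\widetilde{vec}(\cdot)$, whereas complex scalars would not (the map $\widetilde{vec}$ is \emph{not} $\mathbb{C}$-linear, since multiplication by $i$ interchanges real and imaginary parts). This is consistent with the definition of an STBC, where the weight matrices are required to be linearly independent over $\mathbb{R}$, not over $\mathbb{C}$.
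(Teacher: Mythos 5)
Your proposal is correct and follows essentially the same route as the paper: both arguments express each matrix of one spanning set as a real linear combination of the other, push that relation through right multiplication by $\textbf{H}$ and the $\mathbb{R}$-linear map $\widetilde{vec}(\cdot)$, and conclude mutual inclusion of the column spaces by symmetry of the hypothesis. Your explicit remark that the coefficients must be real (since $\widetilde{vec}$ is not $\mathbb{C}$-linear) is a helpful clarification but does not change the substance of the argument.
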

\begin{proof}
	Let $\textbf{v}$ be any vector in the column space of $\textbf{G}_\textbf{A}(\textbf{H})$. Then, $\textbf{v} = \sum_{i=1}^{K}{a_i\widetilde{vec}(\textbf{A}_i\textbf{H})}$, for some choice of real numbers $a_i$, $i=1,\dots,K$. Since each of the \mbox{$\textbf{A}_i \in \langle \textbf{B}_1,\dots,\textbf{B}_K \rangle$}, every $\textbf{A}_i$ can be written as some real linear combination of $\textbf{B}_1,\dots,\textbf{B}_K$. It follows that every $\widetilde{vec}(\textbf{A}_i\textbf{H})$ can be written as some real linear combination of \mbox{$\widetilde{vec}(\textbf{B}_1\textbf{H}),\dots,\widetilde{vec}(\textbf{B}_K\textbf{H})$}. Hence $\textbf{v}$ belongs to the column space of $\textbf{G}_\textbf{B}(\textbf{H})$. Similarly we can show that every vector in the column space of $\textbf{G}_\textbf{B}(\textbf{H})$ belongs to the column space of $\textbf{G}_\textbf{A}(\textbf{H})$ also. This completes the proof.
\end{proof}

The following result shows that if every weight matrix of a given STBC is multiplied on the left by a constant invertible matrix, then the rank of the equivalent channel matrix is unchanged.

\begin{proposition} \label{pr:rank_independent_C_mult}
	Let $\textbf{C} \in \mathbb{C}^{T \times T}$ be any full-rank matrix, $\textbf{B}_i=\textbf{C}\textbf{A}_i$, \mbox{$i=1,\dots,K$} and $\textbf{H}$ be any $N \times M$ complex matrix. Then we have \mbox{$rank(\textbf{G}_\textbf{A}(\textbf{H})) = rank(\textbf{G}_\textbf{B}(\textbf{H}))$}, where $\textbf{G}_\textbf{A}(\textbf{H})$ and $\textbf{G}_\textbf{B}(\textbf{H})$ are as defined in~\eqref{eq:GA_of_H} and~\eqref{eq:GB_of_H}.
\end{proposition}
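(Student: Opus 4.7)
The plan is to produce an invertible real matrix $\textbf{P} \in \mathbb{R}^{2MT \times 2MT}$ satisfying $\textbf{G}_\textbf{B}(\textbf{H}) = \textbf{P}\,\textbf{G}_\textbf{A}(\textbf{H})$; the conclusion on ranks then follows at once, because left-multiplication by an invertible matrix preserves column rank.

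First I would observe that the map $\phi : \textbf{X} \mapsto \textbf{C}\textbf{X}$ is $\mathbb{R}$-linear on $\mathbb{C}^{T \times M}$ (viewed as an $\mathbb{R}$-vector space of dimension $2MT$) and is a bijection, since $\textbf{C}$ is invertible with inverse map $\textbf{X} \mapsto \textbf{C}^{-1}\textbf{X}$. Transporting $\phi$ through the $\mathbb{R}$-isomorphism $\widetilde{vec} : \mathbb{C}^{T \times M} \to \mathbb{R}^{2MT}$ therefore yields an invertible real matrix $\textbf{P}$ with $\widetilde{vec}(\textbf{C}\textbf{X}) = \textbf{P}\,\widetilde{vec}(\textbf{X})$ for every $\textbf{X}$. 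If an explicit form is wanted, writing $\textbf{C} = \textbf{C}_I + i\textbf{C}_Q$ and using the identity $vec(\textbf{C}\textbf{X}) = (\textbf{I}_M \otimes \textbf{C})\,vec(\textbf{X})$ gives
\begin{equation*}
\textbf{P} = \begin{bmatrix} \textbf{I}_M \otimes \textbf{C}_I & -\textbf{I}_M \otimes \textbf{C}_Q \\ \textbf{I}_M \otimes \textbf{C}_Q & \textbf{I}_M \otimes \textbf{C}_I \end{bmatrix},
\end{equation*}
which is the standard real representation of the invertible complex matrix $\textbf{I}_M \otimes \textbf{C}$; however, the abstract argument already guarantees $\textbf{P}$ is invertible, so this explicit form is not strictly needed.

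Applying this identity column-by-column with $\textbf{X} = \textbf{A}_i\textbf{H}$ then gives $\widetilde{vec}(\textbf{B}_i\textbf{H}) = \widetilde{vec}(\textbf{C}\textbf{A}_i\textbf{H}) = \textbf{P}\,\widetilde{vec}(\textbf{A}_i\textbf{H})$ for every $i = 1,\dots,K$, which is exactly the matrix identity $\textbf{G}_\textbf{B}(\textbf{H}) = \textbf{P}\,\textbf{G}_\textbf{A}(\textbf{H})$. Invertibility of $\textbf{P}$ immediately yields the claimed equality of ranks.

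I do not anticipate any real obstacle; the single point worth being careful about is that $\phi$ genuinely is an $\mathbb{R}$-linear bijection of the ambient real vector space, which is a direct consequence of $\textbf{C}$ being invertible over $\mathbb{C}$. As an alternative one could bypass $\textbf{P}$ entirely and argue in a coordinate-free manner: $\phi$ carries the $\mathbb{R}$-span $\langle \textbf{A}_1\textbf{H},\dots,\textbf{A}_K\textbf{H} \rangle$ bijectively onto $\langle \textbf{B}_1\textbf{H},\dots,\textbf{B}_K\textbf{H} \rangle$, so these two subspaces have equal $\mathbb{R}$-dimension, and through the $\mathbb{R}$-isomorphism $\widetilde{vec}$ these dimensions coincide with $rank(\textbf{G}_\textbf{A}(\textbf{H}))$ and $rank(\textbf{G}_\textbf{B}(\textbf{H}))$, respectively.
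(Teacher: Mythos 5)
Your proof is correct. Your primary argument is a more explicit, matrix-level version of what the paper does: you exhibit the invertible real matrix $\textbf{P}$ (the real representation of $\textbf{I}_M \otimes \textbf{C}$, via $vec(\textbf{C}\textbf{X}) = (\textbf{I}_M \otimes \textbf{C})vec(\textbf{X})$) and conclude from the factorization $\textbf{G}_\textbf{B}(\textbf{H}) = \textbf{P}\,\textbf{G}_\textbf{A}(\textbf{H})$, which is slightly stronger than the stated claim since it shows the two column spaces are images of one another under a fixed invertible linear map, not merely of equal dimension. The paper instead works inside $\mathbb{C}^{T \times M}$: it reduces the claim to showing that $\langle \textbf{A}_1\textbf{H},\dots,\textbf{A}_K\textbf{H}\rangle$ and $\langle \textbf{C}\textbf{A}_1\textbf{H},\dots,\textbf{C}\textbf{A}_K\textbf{H}\rangle$ have equal $\mathbb{R}$-dimension, and verifies that the map $\textbf{V} \mapsto \textbf{C}\textbf{V}$ between these spans has trivial kernel because $\textbf{C}$ is invertible; this tacitly uses the identification of $rank(\textbf{G}_\textbf{A}(\textbf{H}))$ with $\dim\langle \textbf{A}_i\textbf{H}\rangle$ through the $\widetilde{vec}$ isomorphism, the same identification you invoke. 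Your closing coordinate-free alternative is essentially the paper's proof verbatim, so the two routes differ only in packaging: yours buys an explicit $\textbf{P}$ (useful if one wants to relate the $\textbf{QR}$ decompositions or the decoders of the two codes), while the paper's is shorter and avoids any computation with real representations.
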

\begin{proof}
It suffices to show that the subspaces \mbox{$\langle \textbf{A}_1\textbf{H},\dots,\textbf{A}_K\textbf{H}\rangle$} and \mbox{$\langle \textbf{C}\textbf{A}_1\textbf{H},\dots,\textbf{C}\textbf{A}_K\textbf{H} \rangle$} have the same dimension. The proof is complete if we show that the vector space homomorphism $\varphi$ from the former subspace to the latter, that sends $\textbf{V}=\sum_{i=1}^{K}{a_i\textbf{A}_i\textbf{H}}$ to $\sum_{i=1}^{K}{a_i\textbf{C}\textbf{A}_i\textbf{H}}$ is a one to one map. If $\textbf{V} \in ker(\varphi)$, then \mbox{$\varphi(\textbf{V})=\textbf{C}\sum_{i=1}^{K}{a_i\textbf{A}_i\textbf{H}}=\textbf{C}\textbf{V}={\bf 0}$}. Since $\textbf{C}$ is invertible, this means that $\textbf{V}={\bf 0}$. This completes the proof.
\end{proof}

It is shown in~\cite{NaR1} that for any $N \geq 1$, there exists an explicitly constructable set of $N^2$ matrices belonging to $\mathbb{C}^{N \times N}$ that are unitary, Hermitian and linearly independent over $\mathbb{R}$. This set of matrices forms a basis for the space of $N \times N$ Hermitian matrices. Denote by $\mathcal{C}_N^{Herm}$ any STBC obtained by using these $N^2$ matrices as weight matrices. For positive integers $n$ and $m$, define the function 
\begin{equation*}
f(n,m)=n^2 - ( (n-m)^+ )^2, 
\end{equation*}
where \mbox{$(a)^+=max\{a,0\}$}. We now state the main result of this paper.
\begin{theorem}	\label{th:main_theorem}
The equivalent channel matrix of the STBC $\mathcal{C}_N^{Herm}$ for $M$ receive antennas has rank $f(N,M)$ with probability $1$.
\end{theorem}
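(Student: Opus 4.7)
The plan is to reduce the problem to computing the dimension of the image of the $\mathbb{R}$-linear map
\[
\Phi_\textbf{H}:\text{Herm}(N)\to \mathbb{C}^{N\times M},\quad \textbf{A}\mapsto \textbf{A}\textbf{H},
\]
where $\text{Herm}(N)$ denotes the real vector space of $N\times N$ Hermitian matrices (of real dimension $N^2$). By Proposition~\ref{pr:rank_independent_basis}, the rank of $\textbf{G}$ depends only on the $\mathbb{R}$-span of the weight matrices, which by construction of $\mathcal{C}_N^{Herm}$ equals $\text{Herm}(N)$. Since $\widetilde{vec}$ is an $\mathbb{R}$-isomorphism from $\mathbb{C}^{T\times M}$ to $\mathbb{R}^{2MT}$, we have $\mathrm{rank}(\textbf{G})=\dim_\mathbb{R}\,\mathrm{image}(\Phi_\textbf{H})$, so by rank-nullity it suffices to show that $\dim_\mathbb{R}\ker(\Phi_\textbf{H})=((N-M)^+)^2$ with probability $1$.

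I would then split into two cases according to the sign of $N-M$. When $M\geq N$, one simply invokes Proposition~\ref{pr:M_less_than_N} to conclude $\mathrm{rank}(\textbf{G})=N^2=f(N,M)$ and $\ker(\Phi_\textbf{H})=\{0\}$ almost surely. The interesting case is $M<N$, where I aim to show $\dim_\mathbb{R}\ker(\Phi_\textbf{H})=(N-M)^2$ for generic $\textbf{H}$.

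For this case, use the fact that with probability $1$, $\textbf{H}$ has full column rank $M$. Write a (thin) QR decomposition $\textbf{H}=\textbf{Q}\textbf{R}$ with $\textbf{Q}\in\mathbb{C}^{N\times M}$ having orthonormal columns and $\textbf{R}$ invertible, so that $\textbf{A}\textbf{H}=0$ iff $\textbf{A}\textbf{Q}=0$. Extend $\textbf{Q}$ to a full unitary $\widetilde{\textbf{Q}}=[\textbf{Q}\,|\,\textbf{Q}']\in\mathbb{C}^{N\times N}$ and consider the change of basis $\textbf{B}=\widetilde{\textbf{Q}}^H\textbf{A}\widetilde{\textbf{Q}}$, which is a real-linear bijection from $\text{Herm}(N)$ to itself. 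The condition $\textbf{A}\textbf{Q}=0$ translates to the first $M$ columns of $\textbf{B}$ being zero, and Hermiticity forces the first $M$ rows to vanish as well. Hence $\textbf{B}$ has the block form $\mathrm{diag}(\textbf{0},\textbf{B}_2)$ with $\textbf{B}_2\in\text{Herm}(N-M)$ arbitrary, giving $\dim_\mathbb{R}\ker(\Phi_\textbf{H})=(N-M)^2$ exactly.

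The bulk of the work is conceptual rather than computational: the two potential obstacles are (i) justifying that passing to any basis of $\text{Herm}(N)$ preserves the rank (handled by Proposition~\ref{pr:rank_independent_basis}) and (ii) performing the unitary change of basis cleanly so that the kernel of $\Phi_\textbf{H}$ is identified with $\text{Herm}(N-M)$. Once those two reductions are in place, the final rank formula $\mathrm{rank}(\textbf{G})=N^2-(N-M)^2=f(N,M)$ in the case $M<N$, together with the $M\geq N$ case, gives the theorem.
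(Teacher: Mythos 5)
Your proposal is correct, but it proves the key dimension count by a genuinely different route than the paper. The paper also reduces the problem to computing $\dim_{\mathbb{R}}\ker(\textbf{A}\mapsto\textbf{A}\textbf{H})$ on the space of Hermitian matrices, but then it builds a kernel element row by row: it works with the left null space $\mathcal{S}=\{\textbf{z}:\textbf{z}^H\textbf{H}=\textbf{0}\}$, proves two auxiliary probabilistic facts (that $\textbf{e}_i$ almost surely does not lie in the column space of $\textbf{H}$, hence that the constraint ``$i$-th coordinate purely real'' cuts exactly one real dimension from $\mathcal{S}$), and then counts the free dimensions of the successive rows $\textbf{a}_1,\dots,\textbf{a}_N$ of a Hermitian kernel element, summing $\bigl(2(N-M)-1\bigr)+\bigl(2(N-M)-3\bigr)+\cdots+1=(N-M)^2$. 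You instead conjugate by a unitary $\widetilde{\textbf{Q}}=[\textbf{Q}\,|\,\textbf{Q}']$ obtained from a thin QR factorization of $\textbf{H}$, which identifies the kernel with $\mathrm{Herm}(N-M)$ in one stroke. Your argument is cleaner and in a sense stronger: it is deterministic, valid for \emph{every} full-column-rank $\textbf{H}$, so probability enters only through the standard fact that $\textbf{H}$ is full rank almost surely, and the two appendix propositions of the paper become unnecessary. What the paper's coordinate-wise argument buys in exchange is an explicit, elementary description of kernel elements and two standalone lemmas about the column space and left null space of a Gaussian $\textbf{H}$ that are stated in a reusable form. The only small points to make explicit when writing your argument up are that $\textbf{A}\textbf{H}=\textbf{0}$ iff $\textbf{A}\textbf{Q}=\textbf{0}$ because $\textbf{R}$ is invertible, and the converse inclusion, namely that every $\textbf{B}=\mathrm{diag}(\textbf{0},\textbf{B}_2)$ with $\textbf{B}_2$ Hermitian maps back to a Hermitian $\textbf{A}=\widetilde{\textbf{Q}}\textbf{B}\widetilde{\textbf{Q}}^H$ annihilating $\textbf{Q}$; both are immediate.
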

\begin{proof}
See Appendix A.
\end{proof}

The rate of $\mathcal{C}_N^{Herm}$ is $\frac{N}{2}$ and the rank of the equivalent channel is less than \mbox{$K=N^2$} whenever \mbox{$M < N$}. Thus, this STBC is singular for all \mbox{$\frac{N}{2} \leq M < N$}.
\begin{example} \label{ex:X_3}
Consider the STBC $\mathcal{C}_3^{Herm}$ used in an asymmetric MIMO system with \mbox{$M=2$} receive antennas. In this case, \mbox{$R = \frac{3}{2} < min\{M,N\}$} and the equivalent channel matrix $\textbf{G}$ is of size \mbox{$12 \times 9$}. From Theorem~\ref{th:main_theorem}, we know that the rank of $\textbf{G}$ is equal to \mbox{$f(3,2)=8$} with probability $1$. Hence, this STBC is singular for $2$ receive antennas. The $9$ weight matrices of the STBC $\mathcal{C}_3^{Herm}$ are as follows
{\small
\begin{align}
\textbf{A}_1=\begin{bmatrix} 1 & 0 & 0 \\ 0 & 1 & 0 \\ 0 & 0 & 1  \end{bmatrix},~\textbf{A}_2=\begin{bmatrix} 1 & 0 & 0 \\ 0 & -1 & 0 \\ 0 & 0 & 1  \end{bmatrix},~\textbf{A}_3=\begin{bmatrix} 1 & 0 & 0 \\ 0 & 1 & 0 \\ 0 & 0 & -1  \end{bmatrix}, \nonumber 
\end{align}
\begin{align}
\textbf{A}_4=\begin{bmatrix} 0 & 1 & 0 \\ 1 & 0 & 0 \\ 0 & 0 & 1  \end{bmatrix},~\textbf{A}_5=\begin{bmatrix} 0 & i & 0 \\ -i & 0 & 0 \\ 0 & 0 & 1  \end{bmatrix},~\textbf{A}_6=\begin{bmatrix} 0 & 0 & 1 \\ 0 & 1 & 0 \\ 1 & 0 & 0  \end{bmatrix}, \nonumber
\end{align}
\begin{align}
\textbf{A}_7=\begin{bmatrix} 0 & 0 & i \\ 0 & 1 & 0 \\ -i & 0 & 0  \end{bmatrix},~\textbf{A}_8=\begin{bmatrix} 1 & 0 & 0 \\ 0 & 0 & 1 \\ 0 & 1 & 0  \end{bmatrix}~\&~\textbf{A}_9=\begin{bmatrix} 1 & 0 & 0 \\ 0 & 0 & i \\ 0 & -i & 0  \end{bmatrix}. \nonumber
\end{align}
}
The structure of the \mbox{$12 \times 9$} upper triangular matrix $R'$ obtained from the \textbf{QR} decomposition of $\textbf{G}$ when $\textbf{H}$ equals the following randomly generated matrix 
\begin{equation*}
\begin{bmatrix}[r]      
   -0.5688 - 0.8117i & -0.1723 + 1.8282i\\
   0.4926 + 0.0742i &  0.1525 - 0.4716i\\
   0.5905 + 0.5107i & -0.8244 + 0.1325i   \end{bmatrix},
\end{equation*}
is given by
\begin{equation*}
R' = \begin{bmatrix}
a&a&a&a&a&a&a&a&a\\
0&a&a&a&a&a&a&a&a\\
0&0&a&a&a&a&a&a&a\\
0&0&0&a&a&a&a&a&a\\
0&0&0&0&a&a&a&a&a\\
0&0&0&0&0&a&a&a&a\\
0&0&0&0&0&0&a&a&a\\
0&0&0&0&0&0&0&a&a\\
0&0&0&0&0&0&0&0&0\\
0&0&0&0&0&0&0&0&0\\
0&0&0&0&0&0&0&0&0\\
0&0&0&0&0&0&0&0&0
\end{bmatrix}.
\end{equation*}

\noindent
The sphere decoder uses the \mbox{$8 \times 9$} matrix $R$ obtained from $R'$ by deleting its last $4$ rows which are all zero. Hence, for this particular channel realization $\textbf{H}$, the sphere decoding complexity is of the order of \mbox{$q^{9-8}=q$}.
\end{example}

The remaining part of this section is concerned with multigroup ML decodable codes. Suppose the code obtained from an STBC $\mathcal{C}$ with a signal set $\mathcal{A}$ is $g$-group ML decodable for some \mbox{$g>1$}. The information symbols $\{x_1,\dots,x_K\}$ can be partitioned into $g$ vectors $\textbf{x}_{\mathcal{I}_1},\dots,\textbf{x}_{\mathcal{I}_g}$ of length $\lambda_1,\dots,\lambda_g$ respectively such that each symbol vector can be ML decoded independently of other symbol vectors. There is a corresponding partition of the channel matrix into $g$ submatrices $\textbf{G}_1,\dots,\textbf{G}_g$, such that 
\begin{equation}\label{submatrices}
  \textbf{Gx} = \sum_{k=1}^{g}{\textbf{G}_k\textbf{x}_{\mathcal{I}_k}},~~\textbf{G}_k \in \mathbb{R}^{2MT \times \lambda_k}, ~~ \textrm{for} ~ k=1,\dots,g.
\end{equation}
 In Theorem 2 of~\cite{SrR2} it is shown that for any $k \neq k'$ and any channel realization $\textbf{H}$, every column of $\textbf{G}_k$ is orthogonal to every column of $\textbf{G}_{k'}$. As a direct consequence of this, we have the following proposition.
\begin{proposition} \label{pr:sum_of_ranks_multigroup}
For any $g$-group ML decodable STBC and any channel realization $\textbf{H}$, \mbox{$rank(\textbf{G}) = \sum_{k=1}^{g}rank(\textbf{G}_k)$}.
\end{proposition}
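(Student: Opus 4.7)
The plan is to read this off almost directly from the orthogonality result cited from \cite{SrR2}. Let $\mathcal{V}_k \subseteq \mathbb{R}^{2MT}$ denote the column space of $\textbf{G}_k$, so that $\dim \mathcal{V}_k = rank(\textbf{G}_k)$. Since $\textbf{G}$ is (up to a fixed permutation of columns consistent with the partition of the information symbols into groups) the horizontal concatenation $[\textbf{G}_1 \; \textbf{G}_2 \; \cdots \; \textbf{G}_g]$, its column space is the subspace sum $\mathcal{V}_1 + \mathcal{V}_2 + \cdots + \mathcal{V}_g$, and hence $rank(\textbf{G}) = \dim(\mathcal{V}_1 + \cdots + \mathcal{V}_g)$.

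Next I would invoke the column-orthogonality statement from Theorem 2 of \cite{SrR2}: for $k \neq k'$, every column of $\textbf{G}_k$ is orthogonal to every column of $\textbf{G}_{k'}$. This immediately lifts to subspaces: any vector in $\mathcal{V}_k$, being a real linear combination of columns of $\textbf{G}_k$, is orthogonal to any vector in $\mathcal{V}_{k'}$ for $k \neq k'$. So the subspaces $\mathcal{V}_1, \ldots, \mathcal{V}_g$ are pairwise orthogonal in the Euclidean inner product on $\mathbb{R}^{2MT}$.

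From here it is a standard linear-algebra fact that pairwise orthogonal subspaces form an internal direct sum. Concretely, if $\textbf{v}_1 + \cdots + \textbf{v}_g = \textbf{0}$ with $\textbf{v}_k \in \mathcal{V}_k$, then taking the inner product of both sides with $\textbf{v}_j$ and using orthogonality gives $\|\textbf{v}_j\|^2 = 0$, so $\textbf{v}_j = \textbf{0}$ for every $j$. Therefore $\mathcal{V}_1 + \cdots + \mathcal{V}_g = \mathcal{V}_1 \oplus \cdots \oplus \mathcal{V}_g$, and consequently
\begin{equation*}
rank(\textbf{G}) \;=\; \dim\!\bigl(\mathcal{V}_1 \oplus \cdots \oplus \mathcal{V}_g\bigr) \;=\; \sum_{k=1}^{g} \dim \mathcal{V}_k \;=\; \sum_{k=1}^{g} rank(\textbf{G}_k).
\end{equation*}

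There is really no hard step here: once the orthogonality of columns across groups is granted, the rest is a one-line appeal to the fact that pairwise orthogonal subspaces are in direct sum and that dimensions of direct sums add. The only thing worth being slightly careful about is that the orthogonality of columns is enough to give orthogonality of the whole spans (which is immediate by bilinearity), so no additional assumption about $\textbf{H}$, the signal set, or the weight matrices is needed beyond what is already supplied by Theorem 2 of \cite{SrR2}.
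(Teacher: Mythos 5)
Your proposal is correct and follows essentially the same route as the paper: both rely on the column-orthogonality across groups from Theorem 2 of~\cite{SrR2} and then conclude that the dimensions of the (pairwise orthogonal) column spaces of the $\textbf{G}_k$ add up to give $rank(\textbf{G})$. You merely spell out the direct-sum argument that the paper leaves implicit.
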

\begin{proof}
Since the column spaces of $\textbf{G}_k$, $k=1,\dots,g$ are orthogonal to each other, the dimension of the column space of $\textbf{G}$ is equal to the sum of the dimensions of the column spaces of $\textbf{G}_k$, $k=1,\dots,g$.  
\end{proof}

\section{Sphere decoding complexity of some known families of codes} \label{sec3}

In this section, we show that all known families of high-rate \mbox{($R>1$)} multigroup ML decodable codes are singular for certain number of receive antennas. Using the properties of the rank of the equivalent channel matrix derived in the previous section, we now derive the sphere decoding complexities of these known multigroup ML decodable STBCs. Table~\ref{tbl:comparison} summarizes the results of this section. The table lists the sphere decoding complexity and the minimum number of receive antennas for non-singularity of $\mathcal{C}_N^{Herm}$, the codes in~\cite{RGYS},~\cite{RYGGZ},~\cite{SrR1}, and the codes in~\cite{NaR1} corresponding to even number of ML decoding groups.


\renewcommand{\arraystretch}{1.75}
\begin{center}
\begin{table*}[thbp]
\begin{threeparttable}
\caption{Comparison of Sphere Decoding Complexities of known singular STBCs}
\begin{tabular} {|l|c|c|c|c|c|c|}
\hline 
\multirow{2}{*}{Code} &  Transmit          &  Delay & Groups           & Rate         & Minimum $M$      &  Order of Sphere            \\ 
                      &    Antennas        &        &                  & (cspcu)      &    for           &  Decoding Complexity                       \\
                      &    $N$             &   $T$  &  $g$             & $R$          &  non-singularity &  Exponent of $q^{\dag,*}$                   \\
\hline 
$\mathcal{C}_N^{Herm}$ (Theorem~\ref{th:main_theorem})   & $\geq 1$           &   $N$  &   $1$            & $\frac{N}{2}$&    $N$           &   $\left((N-M)^+\right)^2$  \\
\hline
Ren et. al.~\cite{RGYS}& $4$               &   $4$  &  $2$             &$\frac{17}{8}$& $4$              &   $\left((4-M)^+\right)^2$ \\
\hline
Ren et al.~\cite{RYGGZ}&$\geq 1$           &$\geq 2N$&  $2$            &$N-\frac{N^2-1}{T}$& $N$  & $(N-M)^+ \cdot (T-N-M)$  \\
\hline
Srinath et al.~\cite{SrR1}&$2^m$, $m \geq 2$&   $N$  &   $2$           &$\frac{N}{4}+\frac{1}{N}$& $\frac{N}{2}$  &$\left((\frac{N}{2}-M)^+\right)^2$ \\
\hline
\multirow{2}{*}{Natarajan et. al.~\cite{NaR1}$^\ddag$}&${ng2^{\lfloor \frac{g-1}{2} \rfloor}}$, ${n \geq 1}$&${N}$&$2\ell,~\ell \geq 1$&${\frac{N}{g2^{g-1}}+\frac{g^2-g}{2N}}$& $\frac{N}{g2^{g-2}}$  &$\left(\left(\frac{N}{g2^{\lfloor \frac{g-1}{2} \rfloor}} - 2^{\lfloor \frac{g-1}{2} \rfloor}M\right)^+\right)^2$ \\
\cline{2-7}
&${n2^{\lfloor \frac{g-1}{2} \rfloor}}$, ${n \geq 1}$ &${gN}$ & $2\ell,~\ell \geq 1$ &${\frac{N}{2^{g-1}}+\frac{g-1}{2N}}$& $\frac{N}{2^{g-2}}$ & $\left(\left(\frac{N}{2^{\lfloor \frac{g-1}{2} \rfloor}} - 2^{\lfloor \frac{g-1}{2} \rfloor}M\right)^+\right)^2$ \\
\hline                                  
\end{tabular}
\label{tbl:comparison}

  \begin{tablenotes}
    \item[$\dag$] The size of the real constellation used is denoted by $q$. 
    \item[$*$] For any real number $a$, $a^+$ is defined as $max\{a,0\}$.
    \item[$\ddag$] \cite{NaR1} contains codes for all \mbox{$g>1$} and not just even values of $g$.
  \end{tablenotes}
\end{threeparttable}
\hrule
\end{table*}
\end{center}
\renewcommand{\arraystretch}{1}


\subsection{Fast-group-decodable STBC from Ren et. al.~\cite{RGYS}} \label{sec3a}

In \cite{RGYS}, a 2-group decodable STBC for 4 transmit antennas with $R=17/8$ was constructed. For this code, with the notations as used in \eqref{submatrices}, $\textbf{x}_{\mathcal{I}_1} = x_1$, $\lambda_1 = 1$, $\textbf{x}_{\mathcal{I}_2} = [ x_2,\dots,x_{16}]^T$, $\lambda_2 = 16$ and $\textbf{G} = [ \textbf{G}_1 ~~ \textbf{G}_2]$. The weight matrix corresponding to $x_1$ is $\textbf{I}_4$. Since any two weight matrices from different groups are Hurwitz-Radon orthogonal, i.e., satisfy \mbox{$\textbf{A}_i^H\textbf{A}_j + \textbf{A}_j^H\textbf{A}_i = {\bf 0}$}, all the matrices in the second group must be skew-Hermitian, i.e., $\textbf{A}_i^H = - \textbf{A}_i$, $i = 2,\cdots,17$. As a result, we can express these weight matrices as $\textbf{A}_i = i\textbf{I}_4.\textbf{A}_i^\prime$, $i = 2,\cdots,17$, with $\textbf{A}_i^\prime$ being Hermitian matrices. It is clear that for any arbitrary channel realization $\textbf{H}$, $\textbf{G}_1=[\widetilde{vec}(\textbf{H})]$ which is non-zero with probability $1$. Hence, $rank(\textbf{G}_1) = 1$ with probability 1. From Proposition~\ref{pr:rank_independent_C_mult} and Theorem~\ref{th:main_theorem}, $rank(\textbf{G}_2) = f(4,M)$ with probability $1$. So,
\begin{enumerate}
 \item for $M=3$, $rank(\textbf{G}_2) = 15$. Since $\textbf{x}_{\mathcal{I}_1}$ and $\textbf{x}_{\mathcal{I}_2}$ can be decoded independently of each other, from Proposition \ref{singular}, the sphere decoding complexity of the first group is independent of $q$ while that of the second group is $\mathcal{O}(q^{(16-15)})=\mathcal{O}(q)$. Consequently, the sphere decoding complexity of the code in~\cite{RGYS} is $\mathcal{O}(q)$ for 3 receive antennas.
\item For $M\geq 4$, $f(4,M)=16$ and the STBC in~\cite{RGYS} is non-singular for 4 or more receive antennas. Hence, its sphere decoding complexity is independent of $q$. 
\end{enumerate}

\subsection{High-rate $2$-group ML decodable codes from Srinath et. al.~\cite{SrR1}} \label{sec3b}

A family of $2$-group ML decodable STBCs was constructed in~\cite{SrR1} for $N=2^m$, $m>1$ antennas with rate $R=\frac{N}{4}+\frac{1}{N}$ cspcu. This family includes the rate $\frac{5}{4}$ code of~\cite{YGT} for $N =4$ as a special case. The number of symbols in the STBC is \mbox{$K=\frac{N^2}{2}+2$}. In the rest of this subsection we show that the sphere decoding complexity is $\mathcal{O}\left( q^{\left(\left(\frac{N}{2}-M\right)^+\right)^2} \right)$ which is polynomial in $q$, and so is large for all \mbox{$\lceil R \rceil \leq M < \frac{N}{2}$}, where $\lceil a \rceil$ is the smallest integer greater than or equal to $a$. Note that the sphere decoding complexity is a decreasing function of the number of receive antennas $M$.

\subsubsection*{Derivation of sphere decoding complexity}
The STBCs constructed in~\cite{SrR1} have a block diagonal structure. The weight matrices for $N=2^m$ antennas are of the form $diag(\textbf{V}_1,\textbf{V}_2)$, where $\textbf{V}_1,\textbf{V}_2 \in \mathbb{C}^{n \times n}$ are unitary and \mbox{$n=\frac{N}{2}$}. Noting that the STBC is 2-group decodable, denote the set of weight matrices belonging to the first and the second group by $\mathcal{G}_1$ and $\mathcal{G}_2$, respectively. For all the matrices in $\mathcal{G}_1$, $\textbf{V}_1$ is constant, say $\textbf{F}_1$, and for all the matrices in $\mathcal{G}_2$, $\textbf{V}_2$ is constant, say $\textbf{F}_2$. Each group contains \mbox{$n^2+1$} real symbols. We will now derive the rank of the submatrix $\textbf{G}_1$ of $\textbf{G}$ that corresponds to $\mathcal{G}_1$.
 
Let $\mathcal{G}_1 = \{\textbf{A}_1,\dots,\textbf{A}_{n^2+1}\}$. Consider the set 
\begin{equation*}
 \mathcal{G}_1^\prime = \{\textbf{A}_i'=\textbf{C}\textbf{A}_i, i=1,\dots,n^2+1\},
\end{equation*}
where $\textbf{C}=diag(\textbf{I}_n , i\textbf{F}_2^H)$. Since $\textbf{C}$ is unitary, from Proposition~\ref{pr:rank_independent_C_mult}, the rank of $\textbf{G}_1$ equals the rank of $\textbf{G}_1'$, the equivalent channel matrix corresponding to $\mathcal{G}_1^\prime$. Since the multiplication of all the weight matrices of an STBC by a unitary matrix does not affect its multigroup ML decodability, for any matrix $\textbf{B}$ belonging to $\mathcal{G}_2$ and any \mbox{$j=1,\dots,n^2+1$}, we have 
	\begin{equation} \label{eq:sec3a_HR_orth}
		\textbf{A}_j'^H(\textbf{C}\textbf{B}) + (\textbf{C}\textbf{B})^H\textbf{A}_j'={\bf 0},
	\end{equation}
where $\textbf{A}_j'$ and $\textbf{C}\textbf{B}$ are block diagonal and are of the form 
\begin{align}
	\textbf{A}_j' = \begin{bmatrix} \textbf{F}_1 & {\bf 0} \\ {\bf 0} & \textbf{D}_j \end{bmatrix} \textrm{ and } 
\textbf{C}\textbf{B} = \begin{bmatrix} \textbf{V} & {\bf 0} \\ {\bf 0}  & i\textbf{I}_n \end{bmatrix}, \nonumber
\end{align}
for some unitary matrices $\textbf{D}_j$ and $\textbf{V}$. Since $\textbf{A}_j'$ and $\textbf{C}\textbf{B}$ satisfy~\eqref{eq:sec3a_HR_orth} and are block diagonal, from Lemma 1 of~\cite{SrR1}, \mbox{$\textbf{D}_j^Hi\textbf{I}_n + (i\textbf{I}_n)^H\textbf{D}_j={\bf 0}$}. Thus, for \mbox{$j=1,\dots,n^2+1$}, $\textbf{D}_j$ is a \mbox{$n \times n$} unitary, Hermitian matrix. Next we use Proposition~\ref{pr:rank_independent_basis} to find the rank of $\textbf{G}_1'$.

Note that \mbox{$\langle \textbf{A}_1',\dots,\textbf{A}_{n^2+1}' \rangle$} is same as the span of
\begin{equation} \label{eq:F_and_D_direct_sum_form}
	\begin{bmatrix} \textbf{F}_1 & {\bf 0} \\ {\bf 0} & {\bf 0} \end{bmatrix}, \begin{bmatrix}{\bf 0} & {\bf 0} \\ {\bf 0} & \textbf{D}_1 \end{bmatrix},\dots,\begin{bmatrix}[l] {\bf 0} & {\bf 0} \\ {\bf 0} & \textbf{D}_{n^2+1} \end{bmatrix}.
\end{equation}
Since $\textbf{A}_1,\dots,\textbf{A}_{n^2+1}$ are linearly independent, $\textbf{A}_1',\dots,\textbf{A}_{n^2+1}'$ are also linearly independent. Further, the first matrix in~\eqref{eq:F_and_D_direct_sum_form} is linearly independent of the remaining matrices and hence the dimension of the span of the remaining matrices in~\eqref{eq:F_and_D_direct_sum_form} is $n^2$. Without loss of generality, let us assume that $\textbf{D}_1,\dots,\textbf{D}_{n^2}$ are linearly independent, thus $\langle \textbf{D}_1,\dots,\textbf{D}_{n^2} \rangle$ is the space of all $n \times n$ Hermitian matrices. Then, \mbox{$\langle \textbf{A}_1',\dots,\textbf{A}_{n^2+1}' \rangle$} equals the space spanned by
\begin{equation} \label{eq:F_and_direct_sum_reduced} 
	\begin{bmatrix} \textbf{F}_1 & {\bf 0} \\ {\bf 0} & {\bf 0} \end{bmatrix}, \begin{bmatrix}{\bf 0} & {\bf 0} \\ {\bf 0} & \textbf{D}_1 \end{bmatrix},\dots,\begin{bmatrix}[l] {\bf 0} & {\bf 0} \\ {\bf 0} & \textbf{D}_{n^2} \end{bmatrix}.
\end{equation} 
From Proposition~\ref{pr:rank_independent_basis}, it is enough if we concentrate on the STBC whose weight matrices are given by~\eqref{eq:F_and_direct_sum_reduced}. Let the channel matrix be partitioned as \mbox{$\textbf{H} = \begin{bmatrix} \textbf{H}_1 \\ \textbf{H}_2 \end{bmatrix}$}, where \mbox{$\textbf{H}_1,\textbf{H}_2 \in \mathbb{C}^{n \times M}$}. We need to compute the dimension of the space spanned by the weight matrices multiplied on the right by $\textbf{H}$ which is
	\begin{equation*}
		\left\langle \begin{bmatrix}\textbf{F}_1\textbf{H}_1 \\ {\bf 0} \end{bmatrix},\begin{bmatrix} {\bf 0} \\ \textbf{D}_1\textbf{H}_2 \end{bmatrix},\dots,\begin{bmatrix}[c]{\bf 0} \\ \textbf{D}_{n^2}\textbf{H}_2 \end{bmatrix} \right\rangle.
	\end{equation*}
	With probability $1$, $\textbf{H}_1$ is non-zero and hence the first matrix is linearly independent of the remaining matrices. From Theorem~\ref{th:main_theorem}, the dimension of the span of the remaining $n^2$ matrices is \mbox{$f(n,M)=n^2 - ( (n-M)^+ )^2$} with probability $1$. Thus $rank(\textbf{G}_1)$ equals $f(n,M) + 1$ with probability $1$. A similar result can also proved for the second ML decoding group, i.e., for $rank(\textbf{G}_2)$. From Proposition~\ref{pr:sum_of_ranks_multigroup}, \mbox{$rank(\textbf{G}) = rank(\textbf{G}_1) + rank(\textbf{G}_2)$} which equals 
\begin{equation*} 
	K'=2\left( \frac{N^2}{4} - \left(\left(\frac{N}{2} - M\right)^+\right)^2 + 1\right).
\end{equation*}
Comparing this with \mbox{$K=2\left( \frac{N^2}{4} + 1\right)$}, we see that the STBC given in~\cite{SrR1} is non-singular only if \mbox{$M \geq \frac{N}{2}\approx 2R$}. Hence the code is singular for all \mbox{$\lceil R \rceil \leq M < \frac{N}{2}$}.

Now, the two groups of symbols can be ML decoded independently of each other, and the number of symbols in each group is $\frac{K}{2}$, with $rank(\textbf{G}_1) = rank(\textbf{G}_1)= \frac{K'}{2}$. Thus, the sphere decoding complexity of the STBC is $\mathcal{O}\left( q^{\frac{K-K'}{2}} \right)$ instead of $\mathcal{O}\left( q^{K-K'} \right)$. Hence, \emph{multigroup ML decodability reduces the sphere decoding complexity even if the STBC is singular}.


\subsection{Two group ML decodable codes from Ren et. al.~\cite{RYGGZ}}

In~\cite{RYGGZ}, $2$-group ML decodable codes for all $N\geq 1$ and all even $T \geq 2N$ were constructed with rate \mbox{$R = N - \frac{N^2-1}{T}$}. The number of symbols per group is \mbox{$\frac{K}{2}=TN-N^2+1$}. In this subsection, we show that the codes of this family are singular for all \mbox{$\lceil R \rceil \leq M < N$} receive antennas and that their sphere decoding complexity is $\mathcal{O}\left( q^{(N-M)^+ \cdot (T-N-M)} \right)$. Using Proposition~\ref{pr:M_less_than_N}, it is clear that these codes are non-singular if and only if \mbox{$M \geq N$}.

\subsubsection*{Derivation of sphere decoding complexity}

The structure and derivation of the sphere decoding complexity of these codes is similar to that of the codes from~\cite{SrR1}, which was discussed in Section~\ref{sec3b}. The weight matrices of the STBCs in~\cite{RYGGZ} are of the form $\begin{bmatrix} \textbf{V}_1 \\ \textbf{V}_2 \end{bmatrix}$, where $\textbf{V}_1,\textbf{V}_2 \in \mathbb{C}^{\frac{T}{2} \times N}$. For all the matrices in the first group, $\textbf{V}_1= \pm \textbf{F}_1$ for some constant matrix $\textbf{F}_1$, and for all the matrices in the second group, $\textbf{V}_2 = \pm \textbf{F}_2$ for some constant matrix $\textbf{F}_2$. The STBCs constructed in~\cite{RYGGZ} are such that for each $k=1,2$, the $\textbf{V}_k$ submatrices of any two weight matrices belonging to different groups are Hurwitz-Radon orthogonal. We consider the case where $\textbf{F}_k$, $k=1,2$ are semi-unitary i.e., \mbox{$\textbf{F}_k^H\textbf{F}_k = \textbf{I}_N$}. We derive the sphere decoding complexity only for the first group. Using a similar argument the complexity for the second group can be derived, and it is same as that of the first group.

Since $\textbf{F}_2$ is semi-unitary, there exists a unitary \mbox{$T \times T$} matrix $\tilde{\textbf{F}_2}$ such that \mbox{$\tilde{\textbf{F}_2}^H\textbf{F}_2 = \begin{bmatrix} \textbf{I}_N \\ {\bf 0} \end{bmatrix}$}. Consider the new STBC $\mathcal{C}'$ obtained by multiplying all the weight matrices of the original STBC on the left by \mbox{$\textbf{C}=\begin{bmatrix} \textbf{I}_T & {\bf 0} \\ {\bf 0} & i\tilde{\textbf{F}_2}^H \end{bmatrix}$}. Then, the lower submatrix of all the weight matrices of the second group are of the form $\begin{bmatrix} \pm \textbf{I}_N \\ {\bf 0} \end{bmatrix}$. Since the lower submatrix of every matrix in the first group is Hurwitz-Radon orthogonal to $\begin{bmatrix} \pm \textbf{I}_N \\ {\bf 0} \end{bmatrix}$, the weight matrices in the first group of $\mathcal{C}'$ have the following structure: $\begin{bmatrix}[l] \textbf{V}_1 \\ \textbf{B} \\ \textbf{E} \end{bmatrix}$, where \mbox{$\textbf{V}_1= \pm \textbf{F}_1$}, $\textbf{B}$ is an $N \times N$ Hermitian matrix and \mbox{$\textbf{E} \in \mathbb{C}^{\frac{T}{2}-N \times N}$}. Let \mbox{$\textbf{B}_1,\dots,\textbf{B}_{N^2}$} be any basis for the space of $N \times N$ Hermitian matrices over $\mathbb{R}$, $L=TN-2N^2$ and \mbox{$\textbf{E}_1,\dots,\textbf{E}_L$} be 
\begin{align}
			\begin{bmatrix} 1 & 0 & \cdots & 0 \\  \vdots & & & \vdots \\ 0 & 0 & \cdots& 0 \end{bmatrix},\begin{bmatrix} 0 & 1 & \cdots & 0 \\  \vdots & & & \vdots \\ 0 & 0 & \cdots& 0 \end{bmatrix}, \cdots,\begin{bmatrix} 0 & 0 & \cdots & 0 \\  \vdots & & & \vdots \\ 0 & 0 & \cdots& 1 \end{bmatrix}, \nonumber
			\end{align}
			\begin{align}
			\begin{bmatrix} i & 0 & \cdots & 0 \\  \vdots & & & \vdots \\ 0 & 0 & \cdots& 0 \end{bmatrix},\begin{bmatrix} 0 & i & \cdots & 0 \\  \vdots & & & \vdots \\ 0 & 0 & \cdots& 0 \end{bmatrix}, \cdots,\begin{bmatrix} 0 & 0 & \cdots & 0 \\  \vdots & & & \vdots \\ 0 & 0 & \cdots& i \end{bmatrix}, \nonumber 
		\end{align}
which is a standard basis for the space of \mbox{$\left( \frac{T}{2}-N \right) \times N$} complex matrices over $\mathbb{R}$. Clearly the space spanned by the weight matrices of the first group of $\mathcal{C}'$ is a subspace of the space spanned by the following $\frac{K}{2}$ linearly independent matrices:
		\begin{align} \label{eq:unbalanced_code_basis}
			\begin{bmatrix} \textbf{F}_1 \\ {\bf 0} \\ {\bf 0} \end{bmatrix}, \begin{bmatrix} {\bf 0} \\ \textbf{B}_n \\ {\bf 0} \end{bmatrix},~\textrm{for}~n=1,\dots,N^2, \begin{bmatrix} {\bf 0} \\ {\bf 0} \\ \textbf{E}_l \end{bmatrix},~\textrm{for}~l=1,\dots,L. 
		\end{align}

From dimension count, it is clear that the matrices in~\eqref{eq:unbalanced_code_basis} form a basis for the space spanned by the weight matrices of the first group of $\mathcal{C}'$. For any non-zero channel realization \mbox{$\textbf{H} \in \mathbb{C}^{N \times M}$}, the subspaces \mbox{$\mathcal{V}_1=\left\langle \begin{bmatrix} \textbf{F}_1\textbf{H} \\ {\bf 0} \\ {\bf 0} \end{bmatrix} \right\rangle$}, \mbox{$\mathcal{V}_2 = \left\langle \begin{bmatrix} {\bf 0} \\ \textbf{B}_n\textbf{H} \\ {\bf 0} \end{bmatrix},~n=1,\dots,N^2\right\rangle$} and \mbox{$\mathcal{V}_3=\left\langle \begin{bmatrix} {\bf 0} \\ {\bf 0} \\ \textbf{E}_l\textbf{H} \end{bmatrix},~l=1,\dots,L \right\rangle$} are such that their pairwise intersections contain only the all zero matrix. Thus, the rank of the equivalent channel matrix of the first group of $\mathcal{C}'$ is 
			\begin{equation*}
				rank(\textbf{G}_1') = dim(\mathcal{V}_1)+ dim(\mathcal{V}_2)+dim(\mathcal{V}_3).
			\end{equation*}
With probability $1$, \mbox{$dim(\mathcal{V}_1)=1$} and from Theorem~\ref{th:main_theorem}, \mbox{$dim(\mathcal{V}_2) = f(N,M)$}. It is straightforward to show that $dim(\mathcal{V}_3)=(T-2N) \cdot min\{N,M\}$ with probability $1$. From Proposition~\ref{pr:M_less_than_N}, we know that every STBC is non-singular for \mbox{$M \geq N$}. We thus consider only the case \mbox{$M<N$}. Then from Proposition~\ref{pr:rank_independent_C_mult}, the rank of the equivalent channel matrix of the first ML decoding group of the STBC given in~\cite{RYGGZ} is 
		\begin{equation*}
			rank(\textbf{G}_1)=rank(\textbf{G}_1') = (T-2N) \cdot M+ f(N,M) + 1.
		\end{equation*}
Compare this with the number of symbols in the first group \mbox{$\frac{K}{2} = (T-2N) \cdot N + N^2 + 1$}. Thus, for any $M<N$, $rank(\textbf{G}_1) < \frac{K}{2}$ with probability $1$. 

Using a similar argument it can be shown that the rank of the equivalent channel matrix of the second group also equals \mbox{$(T-2N) \cdot M+ f(N,M) + 1$} with probability $1$. Since, the two groups can be ML decoded independently of each other, the complexity of sphere decoding the STBC proposed in~\cite{RYGGZ} is $\mathcal{O}(q^{(N-M) \cdot (T-N-M)})$ for any $M<N$ and the STBC is singular for all \mbox{$\lceil R \rceil \leq M < N$}.

\begin{example}
Consider the STBC from~\cite{RYGGZ} for \mbox{$N=4$} and \mbox{$T=2N=8$}. The rate of this code is \mbox{$R=\frac{17}{8}$} and the number of symbols per decoding group is \mbox{$\frac{K}{2}=17$}. From the above discussion it is clear that this STBC is singular for \mbox{$M=3$}. The rank of the equivalent channel matrix of each ML decoding group equals $16$ with probability $1$ and hence the sphere decoding complexity is $\mathcal{O}(q)$.

It is interesting to compare this code with the code from~\cite{RGYS} which we have discussed in Section~\ref{sec3a} and Example~\ref{ex:FGD}. Both codes have the same parameters $N$, $R$ and both have a sphere decoding complexity that is linear in the constellation size $q$. However, the code in~\cite{RGYS} is fast-group-decodable and $5$ levels can be removed from the sphere decoding search tree of the second decoding group. Hence, after conditioning on the value taken by one of the real symbols (to account for the reduced rank of the equivalent channel matrix), the code in~\cite{RGYS} uses a $10$-dimensional sphere decoder to decode the second group of $16$ symbols. For decoding each ML decoding group, the code from~\cite{RYGGZ}, uses a $16$ dimensional search tree after conditioning on the value of one of the real symbols. Thus, the sphere decoding complexity of the code from~\cite{RGYS} is less than that of~\cite{RYGGZ}.
\end{example}

\subsection{Multigroup ML decodable codes from Natarajan et. al~\cite{NaR1}}

In~\cite{NaR1}, delay optimal $g$-group ML decodable codes were constructed for all \mbox{$g>1$}, \mbox{$N=ng2^{\lfloor \frac{g-1}{2} \rfloor}$}, \mbox{$n \geq 1$}, with rate \mbox{$R=\frac{N}{g2^{g-1}} + \frac{g^2-g}{2N}$}. In this subsection we show that the sphere decoding complexity of the codes with even $g$ is of the order of $q^{\left( \left( n - 2^{\lfloor \frac{g-1}{2}\rfloor} M \right)^+\right)^2}$ and that the codes are non-singular only for \mbox{$M \geq \frac{N}{g2^{g-2}}$}. Also in~\cite{NaR1}, non-delay optimal codes with $T=gN$, $N=n2^{\lfloor \frac{g-1}{2} \rfloor}$, \mbox{$n \geq 1$} were constructed. We show that the sphere decoding complexity of these codes for even values of $g$ is of the order of $q^{\left( \left( n - 2^{\lfloor \frac{g-1}{2}\rfloor} M \right)^+\right)^2}$. Simulation results show that the STBCs in~\cite{NaR1} for odd values of $g$ are also singular for certain number of receive antennas.

\subsubsection*{Delay-optimal codes}

The number of symbols per group is \mbox{$\frac{K}{g} = n^2+g-1$} and let \mbox{$m=2^{\lfloor \frac{g-1}{2} \rfloor}$}. We will now derive the rank of the equivalent channel matrix $\textbf{G}_1$ of the STBC corresponding to the first group. The weight matrices of the first group have a block diagonal structure $diag\left( \textbf{D}_1, \textbf{D}_2, \dots, \textbf{D}_g \right)$, where each \mbox{$\textbf{D}_j \in \mathbb{C}^{nm \times nm}$}. The first block $\textbf{D}_1$ is one of the $n^2$ matrices of the form $\textbf{V} \otimes \textbf{U}_1$, where \mbox{$\textbf{V} \in \mathbb{C}^{n \times n}$} is Hermitian, and the remaining $g-1$ matrices $\textbf{D}_j$, \mbox{$j=2,\cdots,g$} are of the form \mbox{$\pm \textbf{I}_n \otimes \textbf{U}_j$} for some set of $g$ unitary \mbox{$m \times m$} matrices \mbox{$\textbf{U}_1,\dots,\textbf{U}_g$}. Let us multiply all the weight matrices of the first group on the right by \mbox{$\textbf{C}=diag\left(\textbf{I}_n \otimes \textbf{U}_1^H,\cdots,\textbf{I}_n \otimes \textbf{U}_g^H \right)$}. Clearly, the new set of weight matrices \mbox{$\textbf{A}_1',\dots,\textbf{A}_{\frac{K}{g}'}$}  also have a block diagonal structure \mbox{$diag\left( \textbf{D}_1, \textbf{D}_2, \dots, \textbf{D}_g \right)$} where $\textbf{D}_1$ is one of the $n^2$ matrices of the form \mbox{$\textbf{V} \otimes \textbf{I}_n$}, where $\textbf{V}$ is \mbox{$n \times n$} Hermitian and the remaining $g-1$  blocks are of the form \mbox{$\pm \textbf{I}_n \otimes \textbf{I}_m$}. It is straightforward to show that there exists a permutation matrix $\textbf{P}$ such that \mbox{$\textbf{P} \cdot \left( \textbf{V} \otimes \textbf{I}_m \right) \cdot \textbf{P}^T = \textbf{I}_m \otimes \textbf{V}$} for any \mbox{$\textbf{V} \in \mathbb{C}^{n \times n}$}. Consider the matrices \mbox{$\textbf{A}_j'' = \textbf{C}'\textbf{A}_j'\textbf{C}'^T$}, $j=1,\dots,\frac{K}{g}$, where \mbox{$\textbf{C}'=diag\left(\textbf{P},\textbf{I}_{nm},\dots,\textbf{I}_{nm}\right)$}. Let $\textbf{B}_1,\dots,\textbf{B}_{n^2}$ be any basis for the space of \mbox{$n \times n$} Hermitian matrices. From dimension count and the structure of the $\textbf{A}_j''$ matrices, $\langle \textbf{A}_1'',\dots,\textbf{A}_\frac{K}{g}'' \rangle$ equals the span of the matrices $\hat{\textbf{A}}_1,\dots,\hat{\textbf{A}}_\frac{K}{g}$ which given by $diag(\textbf{I}_m \otimes \textbf{B}_l, {\bf 0},\dots,{\bf 0})$, \mbox{$l=1,\dots,n^2$}, $diag({\bf 0}, \textbf{I}_{nm},{\bf 0},\dots,{\bf 0})$, \dots, $diag({\bf 0},\dots,{\bf 0},\textbf{I}_{nm})$. Since $\textbf{C}'^T$ is unitary, the statistics of $\textbf{H}$ and $\textbf{C}'^T\textbf{H}$ are same. Along with Propositions~\ref{pr:rank_independent_basis} and~\ref{pr:rank_independent_C_mult}, it is thus clear that $rank(\textbf{G}_1)$ has the same statistics as the rank of the equivalent channel matrix $\hat{\textbf{G}}_1$ of the STBC whose weight matrices are $\hat{\textbf{A}}_1,\dots,\hat{\textbf{A}}_\frac{K}{g}$.

Let the channel realization be \mbox{$\textbf{H} = \begin{bmatrix} \textbf{H}_1^T & \cdots & \textbf{H}_g^T \end{bmatrix}^T$}, where \mbox{$\textbf{H}_k \in \mathbb{C}^{nm \times M}$}, $k=1,\dots,g$. We are interested in the dimension of \mbox{$\langle \hat{\textbf{A}}_1\textbf{H},\dots,\hat{\textbf{A}}_\frac{K}{g}\textbf{H} \rangle$} which is the span of
	\begin{align} \label{eq:AG_delay_optimal_span}
		\begin{bmatrix} \left(\textbf{I}_m \otimes \textbf{B}_l \right)\textbf{H}_1 \\ {\bf 0} \\ \vdots \\ {\bf 0} \end{bmatrix}, ~\textrm{for}~l=1,\dots,n^2,\begin{bmatrix} {\bf 0} \\ \textbf{H}_2 \\ \vdots \\ {\bf 0} \end{bmatrix}, \dots, \begin{bmatrix} {\bf 0} \\ {\bf 0} \\ \vdots \\ \textbf{H}_g \end{bmatrix}.
	\end{align}
	Thus, with probability $1$, $rank(\hat{\textbf{G}}_1)$ equals the sum of $g-1$ and the dimension of the span of the first $n^2$ matrices in~\eqref{eq:AG_delay_optimal_span}. Let us rewrite $\textbf{H}_1$ as $\begin{bmatrix} \textbf{H}_{1,1}^T & \dots & \textbf{H}_{1,m}^T \end{bmatrix}^T$, where $\textbf{H}_{1,j} \in \mathbb{C}^{n \times M}$ for $j=1,\dots,m$. Thus, the dimension of the span of first $n^2$ matrices in~\eqref{eq:AG_delay_optimal_span} is same as that of
		\begin{align*}
			\begin{bmatrix} \textbf{B}_1\textbf{H}_{1,1} \\ \textbf{B}_1\textbf{H}_{1,2} \\ \vdots \\\textbf{B}_1\textbf{H}_{1,m}  \end{bmatrix}, \begin{bmatrix} \textbf{B}_2\textbf{H}_{1,1} \\ \textbf{B}_2\textbf{H}_{1,2} \\ \vdots \\\textbf{B}_2\textbf{H}_{1,m}  \end{bmatrix}, \dots, \begin{bmatrix} \textbf{B}_{n^2}\textbf{H}_{1,1} \\ \textbf{B}_{n^2}\textbf{H}_{1,2} \\ \vdots \\\textbf{B}_{n^2}\textbf{H}_{1,m}  \end{bmatrix}.
		\end{align*}
This in turn, is equal to the dimension of the span of the following matrices
\begin{align}
&	\begin{bmatrix} \textbf{B}_1\textbf{H}_{1,1} ~  \textbf{B}_1\textbf{H}_{1,2} ~ \cdots ~ \textbf{B}_1\textbf{H}_{1,m}  \end{bmatrix}, \nonumber \\
&	\begin{bmatrix} \textbf{B}_2\textbf{H}_{1,1} ~  \textbf{B}_2\textbf{H}_{1,2} ~ \cdots ~ \textbf{B}_2\textbf{H}_{1,m}  \end{bmatrix}, \nonumber \\
&         ~ ~ ~ ~ ~ ~ ~ ~ ~ ~ ~ ~ ~ ~ ~ ~ ~ ~ ~ ~ \vdots \nonumber \\
&	\begin{bmatrix} \textbf{B}_{n^2}\textbf{H}_{1,1} ~  \textbf{B}_{n^2}\textbf{H}_{1,2} ~ \cdots ~ \textbf{B}_{n^2}\textbf{H}_{1,m}  \end{bmatrix}. \nonumber
\end{align}
From Theorem~\ref{th:main_theorem}, this dimension equals $f(n,mM)$ with probability $1$. Hence, 
\begin{equation*}
	rank(\textbf{G}_1) = rank(\hat{\textbf{G}}_1) = f(n,mM) + g - 1
\end{equation*}
with probability $1$. Compare this with the number of symbols per group \mbox{$\frac{K}{g} = n^2 + g - 1$}. Thus, the delay optimal STBCs of~\cite{NaR1} for even values of $g$ are singular whenever \mbox{$\lceil R \rceil \leq M < \frac{n}{2^{\lfloor \frac{g-1}{2} \rfloor}}$}. Similar results on the rank of the equivalent channel matrix can be proved for other ML decoding groups also. Thus, the sphere decoding complexity of these codes is of the order of $q^{\left( \left( n - 2^{\lfloor \frac{g-1}{2}\rfloor} M \right)^+\right)^2}$. This STBC is singular only for \mbox{$M \geq \frac{N}{g2^{g-2}} \approx 2R$}. For \mbox{$g=2$}, and equal values of $N$, the delay-optimal codes of~\cite{NaR1} and the codes of~\cite{SrR1} have equal rate and the same order of sphere decoding complexity.

\begin{example}
Consider the $g=2$, $N=6$ delay-optimal code of~\cite{NaR1}. The underlying STBC has rate $R=\frac{5}{3}$ and there are $\frac{K}{2}=10$ symbols per ML decoding group. From the ongoing discussion, for \mbox{$M=2$} receive antennas \mbox{$rank(\textbf{G}_k)=9$} with probability $1$ for $k=1,2$. Hence, this STBC is singular for \mbox{$M=2$} and the sphere decoding complexity is $\mathcal{O}(q)$. Let us denote the weight matrices of $\mathcal{C}_3^{Herm}$ given in Example~\ref{ex:X_3} by \mbox{$\textbf{A}_1',\dots,\textbf{A}_9'$}. Then, the $20$ weight matrices of the $N=6$, $g=2$ code of~\cite{NaR1} are given by
\begin{align*}
&\textbf{A}_{\ell} = \begin{bmatrix} i\textbf{A}_{\ell}' & {\bf 0} \\ {\bf 0} & \textbf{I}_3 \end{bmatrix}, \ell =1,\dots,9, ~~\textbf{A}_{10}=\begin{bmatrix} i\textbf{A}_{1}' & {\bf 0} \\ {\bf 0} & -\textbf{I}_3 \end{bmatrix}, \\ \nonumber
&\textbf{A}_{\ell+10} = \begin{bmatrix} \textbf{I}_3 & {\bf 0} \\ {\bf 0} & i\textbf{A}_{\ell}' \end{bmatrix}, \ell =1,\dots,9, ~\textrm{and}~\textbf{A}_{20}=\begin{bmatrix} -\textbf{I}_3 & {\bf 0} \\ {\bf 0} & i\textbf{A}_1' \end{bmatrix}. \nonumber
\end{align*}
Now consider the \mbox{$24 \times 20$} $\textbf{G}$ matrix corresponding to the randomly generated channel realization
\begin{equation*}
\textbf{H} = \begin{bmatrix}[r]
  -0.0583 + 1.2105i  & 0.0708 + 0.6795i\\
  -1.3669 - 0.1373i  &-0.3850 + 0.0877i\\
  -0.3104 - 1.5120i  & 0.2146 + 1.0159i\\
  -1.2690 - 0.5937i  &-0.4245 - 1.3866i\\
   0.5942 + 0.9578i  & 0.3465 - 0.1398i\\
  -0.6279 - 0.7581i  & 0.5228 - 0.8541i     
    \end{bmatrix}.
\end{equation*}
The rank of $\textbf{G}$ is only $18$ and the structure of the $24 \times 20$ upper triangular matrix $\textbf{R}'$ obtained from the \textbf{QR} decomposition is given in~\eqref{eq:R_N6} at the top of the next page. The matrix $\textbf{R}'$ is of the form $\begin{bmatrix}[l] \textbf{R}_1 & {\bf 0}_{9 \times 10} \\ {\bf 0}_{1 \times 10} & {\bf 0}_{1 \times 10} \\ {\bf 0}_{9 \times 10} & \textbf{R}_2 \\ {\bf 0}_{5 \times 10} & {\bf 0}_{5 \times 10} \end{bmatrix}$, where \mbox{$\textbf{R}_1,\textbf{R}_2 \in \mathbb{R}^{9 \times 10}$}. The \mbox{$10 \times 10$} all zero submatrix at the upper right corner of $\textbf{R}'$ is due to the $2$-group ML decodability property of the code. The two sphere decoders corresponding to the two ML decoding groups use the matrices $\textbf{R}_1$ and $\textbf{R}_2$ respectively. Clearly the rank of $\textbf{R}_1$ and $\textbf{R}_2$ is $9$, and hence the sphere decoding complexity is $\mathcal{O}(q)$.
\begin{figure*}
\begin{equation} \label{eq:R_N6}
\textbf{R}' = \begin{bmatrix}
a&a&a&a&a&a&a&a&a&a&0&0&0&0&0&0&0&0&0&0\\
0&a&a&a&a&a&a&a&a&a&0&0&0&0&0&0&0&0&0&0\\
0&0&a&a&a&a&a&a&a&a&0&0&0&0&0&0&0&0&0&0\\
0&0&0&a&a&a&a&a&a&a&0&0&0&0&0&0&0&0&0&0\\
0&0&0&0&a&a&a&a&a&a&0&0&0&0&0&0&0&0&0&0\\
0&0&0&0&0&a&a&a&a&a&0&0&0&0&0&0&0&0&0&0\\
0&0&0&0&0&0&a&a&a&a&0&0&0&0&0&0&0&0&0&0\\
0&0&0&0&0&0&0&a&a&a&0&0&0&0&0&0&0&0&0&0\\
0&0&0&0&0&0&0&0&a&a&0&0&0&0&0&0&0&0&0&0\\
0&0&0&0&0&0&0&0&0&0&0&0&0&0&0&0&0&0&0&0\\
0&0&0&0&0&0&0&0&0&0&a&a&a&a&a&a&a&a&a&a\\
0&0&0&0&0&0&0&0&0&0&0&a&a&a&a&a&a&a&a&a\\
0&0&0&0&0&0&0&0&0&0&0&0&a&a&a&a&a&a&a&a\\
0&0&0&0&0&0&0&0&0&0&0&0&0&a&a&a&a&a&a&a\\
0&0&0&0&0&0&0&0&0&0&0&0&0&0&a&a&a&a&a&a\\
0&0&0&0&0&0&0&0&0&0&0&0&0&0&0&a&a&a&a&a\\
0&0&0&0&0&0&0&0&0&0&0&0&0&0&0&0&a&a&a&a\\
0&0&0&0&0&0&0&0&0&0&0&0&0&0&0&0&0&a&a&a\\
0&0&0&0&0&0&0&0&0&0&0&0&0&0&0&0&0&0&a&a\\
0&0&0&0&0&0&0&0&0&0&0&0&0&0&0&0&0&0&0&0\\
0&0&0&0&0&0&0&0&0&0&0&0&0&0&0&0&0&0&0&0\\
0&0&0&0&0&0&0&0&0&0&0&0&0&0&0&0&0&0&0&0\\
0&0&0&0&0&0&0&0&0&0&0&0&0&0&0&0&0&0&0&0\\
0&0&0&0&0&0&0&0&0&0&0&0&0&0&0&0&0&0&0&0
     \end{bmatrix}
\end{equation}
\hrule
\end{figure*}
\end{example}

\subsubsection*{Non delay-optimal codes}

The non delay-optimal code for \mbox{$N=n2^{\lfloor \frac{g-1}{2}\rfloor}$} has rate \mbox{$R=\frac{N}{2^{g-1}} + \frac{g-1}{2N}$} and number of symbols per group \mbox{$\frac{K}{g}=n^2+g-1$}. The weight matrices of the first group are of the form $\begin{bmatrix} \textbf{D}_1^T, \textbf{D}_2^T, \dots, \textbf{D}_g^T \end{bmatrix}^T$, where each \mbox{$\textbf{D}_j \in \mathbb{C}^{nm \times nm}$}. The first block $\textbf{D}_1$ is one of the $n^2$ matrices of the form $\textbf{V} \otimes \textbf{U}_1$, where \mbox{$\textbf{V} \in \mathbb{C}^{n \times n}$} is Hermitian, and the remaining $g-1$ matrices $\textbf{D}_j$, \mbox{$j=2,\cdots,g$} are of the form \mbox{$\pm \textbf{I}_n \otimes \textbf{U}_j$} for some set of $g$ unitary \mbox{$m \times m$} matrices \mbox{$\textbf{U}_1,\dots,\textbf{U}_g$}.

Using an argument similar to the one used with delay-optimal codes, it can be shown that the sphere decoding complexity of the non delay optimal codes is of the order of $q^{\left( \left( n - 2^{\lfloor \frac{g-1}{2}\rfloor} M \right)^+\right)^2}$ and that the codes are non-singular only for \mbox{$M \geq \frac{N}{2^{g-2}} \approx 2R$}. For $g=2$ and equal values of $N$ and $T$, the non-delay optimal codes of~\cite{NaR1} and the codes of~\cite{RYGGZ} have the same rate and sphere decoding complexity.

\section{Discussion} \label{sec4}

In this paper we have introduced the notion of singularity of STBCs and showed that all known families of high rate multigroup ML decodable codes are singular for certain number of receive antennas. The following facts which were not known before have been shown.
\begin{itemize}
\item Though the $N=4$, $T=4$ code of~\cite{RGYS} and the $N=4$, $T=8$ code of~\cite{RYGGZ} have identical rate of $\frac{17}{8}$ cspcu, the sphere decoding complexity of the code from~\cite{RGYS} is less than that of the code from~\cite{RYGGZ}. 
\item For \mbox{$g=2$}, and equal values of $N$, the delay-optimal codes of~\cite{NaR1} and the codes of~\cite{SrR1} have equal rate and the same order of sphere decoding complexity.
\item For equal values of $N$ and $T$, the codes in~\cite{RYGGZ} and the non-delay optimal codes of~\cite{NaR1} have identical rate and sphere decoding complexities.
\end{itemize}

The results and ideas presented in this paper have brought to light the following important open problems.
\begin{itemize}
\item Is there an algebraic criterion that ensures that a code is non singular? For example, is every code with non vanishing determinant also non-singular for all \mbox{$M \geq R$}?
\item Do there exist high rate multigroup ML decodable codes that are non-singular for arbitrary values of $M$?
\item Do there exist singular high rate multigroup ML decodable STBCs with lower sphere decoding complexity than that of the known codes?
\end{itemize}

\section*{Acknowledgment} 
This work was supported  partly by the DRDO-IISc program on Advanced Research in Mathematical Engineering through a research grant, and partly by the INAE Chair Professorship grant to B.~S.~Rajan 

\section*{Appendix A}

\begin{center}
	\textsc{Proof of Theorem~\ref{th:main_theorem}}
\end{center}

From Proposition~\ref{pr:M_less_than_N} it is clear that the theorem is true for \mbox{$M \geq N$}. Thus, we will only consider the case \mbox{$M<N$}. Before giving the proof of Theorem~\ref{th:main_theorem} we present two results which are used in the proof. Let $\textbf{e}_1,\dots,\textbf{e}_N$ be the $N$ columns of the matrix $\textbf{I}_N$.

\begin{proposition} \label{pr:appendix_first}
For any \mbox{$i=1,\dots,N$}, with probability~$1$ (w.p.1), the vector $\textbf{e}_i$ does not belong to the column space of the channel matrix $\textbf{H}$.
\end{proposition}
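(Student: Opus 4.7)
My plan is to reduce the claim to the standard fact that a matrix of i.i.d.\ complex Gaussian entries has full (column) rank with probability one, and then use the fact that $M<N$ means deleting a single row still leaves at least $M$ rows available.

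Concretely, I would proceed by contradiction. Suppose that $\textbf{e}_i \in \mathrm{col}(\textbf{H})$ on some event of positive probability. Then on that event there exists $\textbf{a} \in \mathbb{C}^M$ with $\textbf{H}\textbf{a} = \textbf{e}_i$. Let $\textbf{H}_{(i)} \in \mathbb{C}^{(N-1) \times M}$ denote the matrix obtained by deleting the $i$-th row of $\textbf{H}$. Reading off the equation $\textbf{H}\textbf{a} = \textbf{e}_i$ in every row $j \ne i$ (where the right-hand side is zero) gives $\textbf{H}_{(i)} \textbf{a} = \textbf{0}$, while reading off row $i$ gives $\textbf{h}_i^T \textbf{a} = 1$, where $\textbf{h}_i^T$ is the removed row. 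In particular $\textbf{a} \ne \textbf{0}$.

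The key observation is that $\textbf{H}_{(i)}$ is an $(N-1) \times M$ matrix whose entries are i.i.d.\ circularly symmetric complex Gaussian, and since $M < N$ we have $M \leq N-1$, i.e., the matrix is either square or tall. A standard result (cited in the excerpt via~\cite{TsV}, and provable either by noting that $\det$ of any $M \times M$ submatrix is a nontrivial polynomial in the real and imaginary parts of the entries, whose zero set has Lebesgue measure zero) then gives $\mathrm{rank}(\textbf{H}_{(i)}) = M$ with probability one, so $\textbf{H}_{(i)}$ has trivial null space almost surely. This contradicts the existence of a nonzero $\textbf{a}$ with $\textbf{H}_{(i)} \textbf{a} = \textbf{0}$, establishing that the event $\{\textbf{e}_i \in \mathrm{col}(\textbf{H})\}$ has probability zero.

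I do not foresee any real obstacle here; the only slightly nontrivial ingredient is invoking the almost-sure full-rank property of the reduced matrix $\textbf{H}_{(i)}$, which is exactly where the hypothesis $M < N$ enters. If a self-contained argument were preferred over citing~\cite{TsV}, I would write out the measure-zero argument: the set $\{\textbf{H} : \mathrm{rank}(\textbf{H}_{(i)}) < M\}$ is the zero locus of the finite collection of $M \times M$ minor determinants of $\textbf{H}_{(i)}$, each of which is a nonzero polynomial in the $2M(N-1)$ real coordinates, hence has Lebesgue measure zero, and the distribution of $\textbf{H}$ is absolutely continuous with respect to Lebesgue measure.
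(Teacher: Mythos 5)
Your proof is correct, but it takes a somewhat different route from the paper's. You delete the $i$-th row and argue that any representation $\textbf{H}\textbf{a}=\textbf{e}_i$ would force a nonzero vector $\textbf{a}$ into the null space of the $(N-1)\times M$ Gaussian matrix $\textbf{H}_{(i)}$, which has full column rank w.p.~1 since $M\le N-1$; this handles every index $i$ in one stroke. The paper instead treats $i=1$ by augmenting columns: it writes $\textbf{H}=\begin{bmatrix}\textbf{H}_1\\ \textbf{H}_2\end{bmatrix}$ with $\textbf{H}_2\in\mathbb{C}^{M\times M}$, forms $\bar{\textbf{H}}=\begin{bmatrix}\textbf{I}_{N-M} & \textbf{H}_1\\ {\bf 0} & \textbf{H}_2\end{bmatrix}$, and uses $det(\bar{\textbf{H}})=det(\textbf{H}_2)\neq 0$ w.p.~1 to conclude that $\textbf{e}_1$, being the first column of $\bar{\textbf{H}}$, cannot lie in the span of the last $M$ columns (which are exactly the columns of $\textbf{H}$); general $i$ is then reduced to $i=1$ via a permutation matrix $\textbf{P}$ with $\textbf{P}\textbf{e}_i=\textbf{e}_1$ and the invariance of the distribution of $\textbf{H}$ under left multiplication by $\textbf{P}$. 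Both arguments rest on the same probabilistic ingredient (almost-sure nonsingularity of Gaussian matrices, as cited from~\cite{TsV} or justified by your measure-zero polynomial argument), and both correctly exploit the standing assumption $M<N$ of the appendix, without which the statement fails. What your version buys is directness and symmetry in $i$, avoiding the distributional-invariance step; what the paper's version buys is that it only needs nonsingularity of a single $M\times M$ square block rather than full column rank of an $(N-1)\times M$ matrix, reducing everything to one determinant identity.
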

\begin{proof}
We first prove the result for \mbox{$i=1$}. Let the channel realization \mbox{$\textbf{H} = \begin{bmatrix} \textbf{H}_1 \\ \textbf{H}_2 \end{bmatrix}$}, where \mbox{$\textbf{H}_1 \in \mathbb{C}^{N-M \times M}$} and \mbox{$\textbf{H}_2 \in \mathbb{C}^{M \times M}$}. Now, consider the matrix \mbox{$\bar{\textbf{H}} = \begin{bmatrix}\textbf{I}_{N-M} & \textbf{H}_1 \\ {\bf 0} & \textbf{H}_2 \end{bmatrix} \in \mathbb{C}^{N \times N}$}. We have, \mbox{$det(\bar{\textbf{H}})=det(\textbf{I}_{N-M})\cdot det(\textbf{H}_2)=det(\textbf{H}_2)$} which is non-zero w.p.1~\cite{TsV}. Thus, w.p.1, the columns of $\bar{\textbf{H}}$ are linearly independent over $\mathbb{C}$, and since $\textbf{e}_1$ is the first column of $\bar{\textbf{H}}$, this means that w.p.1, $\textbf{e}_1$ does not belong to the column space of the matrix $\textbf{H}$.

Now consider any \mbox{$i \in \{1,\dots,N\}$}. Let $\textbf{P}$ be an \mbox{$N \times N$} permutation matrix such that $\textbf{Pe}_i = \textbf{e}_1$. Since $\textbf{P}$ is full-ranked, $\textbf{e}_i$ belongs to the column space of $\textbf{H}$ if and only if $\textbf{e}_1$ belongs to the column space of $\textbf{PH}$. Since $\textbf{P}$ is unitary, the distribution of $\textbf{H}$ and $\textbf{PH}$ are one and the same, and hence the probability that $\textbf{e}_1$ belongs to the column space of $\textbf{PH}$ is $0$. Thus, with probability $1$, $\textbf{e}_i$ does not belong to the column space of $\textbf{H}$.
\end{proof}

For a given channel realization $\textbf{H}$, let \mbox{$\mathcal{S} =\{\textbf{z}  \in \mathbb{C}^N | \textbf{z}^H\textbf{H}=\textbf{0}\}$}. Since $\textbf{H}$ is of rank $M$ w.p.1, the dimension of $\mathcal{S}$ over $\mathbb{R}$ is equal to $2(N-M)$ w.p.1. Let \mbox{$\textbf{z} = [z_1~z_2~\cdots~z_N]^T$} and for \mbox{$i=1,\dots,N$}, let \mbox{$\varphi_i:\mathcal{S} \to \mathbb{R}$} be the vector space homomorphism that sends the vector $\textbf{z}$ to the real number $(z_{i})_Q$. We are interested in the dimension of the subspace of $\mathcal{S}$ which is composed of vectors whose $i^{th}$ component is purely real, i.e., in the dimension of $ker(\varphi_i)$.

\begin{proposition} \label{pr:appendix_second}
For $i=1,\dots,N$, the dimension of image of $\mathcal{S}$ under the map $\varphi_i$, \mbox{$dim\left( ker(\varphi_i) \right) = 2(N-M)-1$} w.p.1.
\end{proposition}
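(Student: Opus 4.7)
The plan is to reduce the claim to Proposition~\ref{pr:appendix_first} via a rank--nullity argument on the $\mathbb{R}$-linear map $\varphi_i : \mathcal{S} \to \mathbb{R}$. Since $\dim_{\mathbb{R}}(\mathcal{S}) = 2(N-M)$ w.p.1 and the codomain $\mathbb{R}$ is one dimensional, the image of $\varphi_i$ is either $\{0\}$ or all of $\mathbb{R}$; by rank--nullity it therefore suffices to show that w.p.1 the map $\varphi_i$ is not identically zero, because in that case $\dim(\mathrm{image}(\varphi_i)) = 1$ and hence $\dim(\ker(\varphi_i)) = 2(N-M)-1$.

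Next I would translate the condition ``$\varphi_i \equiv 0$'' into a statement about the column space of $\textbf{H}$. Saying $\varphi_i(\textbf{z}) = (z_i)_Q = 0$ for every $\textbf{z} \in \mathcal{S}$ means that the $i$-th coordinate of every vector in $\mathcal{S}$ is real. The key observation is that $\mathcal{S}$ is a \emph{complex} subspace of $\mathbb{C}^N$, so if $\textbf{z} \in \mathcal{S}$ then $i\textbf{z} \in \mathcal{S}$ as well; applying the hypothesis to $i\textbf{z}$ forces $(z_i)_I = 0$ too. Consequently $\varphi_i \equiv 0$ is equivalent to the stronger statement that $z_i = 0$ for every $\textbf{z} \in \mathcal{S}$, i.e.\ $\mathbf{e}_i^H \textbf{z} = 0$ for all $\textbf{z} \in \mathcal{S}$, i.e.\ $\mathbf{e}_i \in \mathcal{S}^{\perp}$.

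Finally I would identify $\mathcal{S}^{\perp}$ with the column space of $\textbf{H}$. By definition $\mathcal{S} = \{\textbf{z} \in \mathbb{C}^N : \textbf{z}^H \textbf{H} = \mathbf{0}\}$ is precisely the orthogonal complement (under the standard Hermitian inner product on $\mathbb{C}^N$) of the column space of $\textbf{H}$, so $\mathcal{S}^{\perp}$ equals that column space. Therefore the event $\{\varphi_i \equiv 0\}$ coincides with $\{\mathbf{e}_i \in \mathrm{col}(\textbf{H})\}$, which by Proposition~\ref{pr:appendix_first} has probability zero. This finishes the argument.

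I do not anticipate a serious obstacle: the only subtle point is using the complex structure of $\mathcal{S}$ to upgrade ``$i$-th component real'' to ``$i$-th component zero'', and everything else is rank--nullity plus an invocation of Proposition~\ref{pr:appendix_first}. The argument moreover works uniformly in $i$, as required.
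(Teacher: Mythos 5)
Your proposal is correct and follows essentially the same route as the paper: show via rank–nullity that it suffices to rule out $\varphi_i \equiv 0$ w.p.1, observe that $\varphi_i \equiv 0$ forces $z_i = 0$ for all $\textbf{z} \in \mathcal{S}$ by exploiting the complex structure of $\mathcal{S}$ (the paper multiplies by $iz_i^{*}$, you by $i$ — the same trick), and conclude that this would place $\textbf{e}_i$ in the column space of $\textbf{H}$, an event of probability zero by Proposition~\ref{pr:appendix_first}. Your orthogonal-complement phrasing $\mathcal{S}^{\perp} = \mathrm{col}(\textbf{H})$ is just a repackaging of the paper's dimension comparison between $[\textbf{H}~\textbf{e}_i]$ and $\textbf{H}$, so there is nothing substantive to distinguish the two arguments.
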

\begin{proof}
For any given $\textbf{H}$, \mbox{$\varphi_i(\mathcal{S}) \subset \mathbb{R}$} and hence $dim\left(\varphi_i(\mathcal{S})\right)$ is either $0$ or $1$. Suppose, $dim\left(\varphi_i(\mathcal{S})\right)=0$, then, there is no vector $\textbf{z}$ in $\mathcal{S}$ such that $z_i$ is non-zero because if such a $\textbf{z}$ exists, the vector $iz_i^* \cdot \textbf{z}$ belongs to $\mathcal{S}$ and the imaginary part of its $i^{th}$ component is $|z_i|^2 \neq 0$, and thus $dim\left(\varphi_i(\mathcal{S})\right)=1$, which is a contradiction. Since, for all the vectors in $\mathcal{S}$, the $i^{th}$ component is $0$, we have 
\begin{equation*}
	\mathcal{S} = \{\textbf{z}|\textbf{z}^H\textbf{H}=\textbf{0}\} = \{\textbf{z}|\textbf{z}^H[\textbf{H}~\textbf{e}_i]=\textbf{0}\}.
\end{equation*}
Thus, the dimension of the column space of $\textbf{H}$ and $[\textbf{H}~\textbf{e}_i]$ are the same. This means that $\textbf{e}_i$ belongs to the column space of $\textbf{H}$. From Proposition~\ref{pr:appendix_first}, $\textbf{e}_i$ belongs to the column space of $\textbf{H}$ w.p.$0$ and hence $dim\left(\varphi_i(\mathcal{S})\right)=0$ w.p.$0$. Thus, $dim\left(\varphi_i(\mathcal{S})\right)=1$  w.p.1. From rank-nullity theorem, \mbox{$dim\left( ker(\varphi_i) \right) = dim(\mathcal{S}) - dim(\varphi_i(\mathcal{S})) = 2(N-M)-1$} w.p.1.
\end{proof}

\vspace{4mm}
\noindent
\emph{Proof of Theorem~\ref{th:main_theorem}:} Let the weight matrices of the STBC $\mathcal{C}_N^{Herm}$ be \mbox{$\textbf{A}_1,\dots,\textbf{A}_{N^2}$} and let the space of \mbox{$N \times N$} Hermitian matrices over $\mathbb{R}$ be given by \mbox{$\mathcal{U} = \langle \textbf{A}_1,\dots,\textbf{A}_{N^2} \rangle$}. For a given channel realization $\textbf{H}$, let \mbox{$\rho: \mathcal{U} \to \mathbb{C}^{N \times M}$} be the $\mathbb{R}$-vector space homomorphism that sends the matrix $\textbf{A}$ to $\textbf{A}\textbf{H}$. Clearly, $rank(\textbf{G})$ is equal to the dimension of the subspace $\rho(\mathcal{U})$ over $\mathbb{R}$. Since $\rho(\mathcal{U})$ is isomorphic to $\mathcal{U}/ker(\rho)$ as vector spaces, we have
\begin{equation*}
rank(\textbf{G}) = dim(\mathcal{U}) - dim(ker(\rho)) = N^2 - dim\left( ker(\rho) \right).
\end{equation*}
Thus, it is enough to show that \mbox{$dim\left(ker(\rho)\right)=(N-M)^2$} w.p.1.
Let \mbox{$\textbf{A} \in ker(\rho)$} and let \mbox{$\textbf{a}_1^H,\dots,\textbf{a}_N^H$} denote the $N$ rows of $\textbf{A}$. Then, $\textbf{a}_1$ satisfies $\textbf{a}_1^H\textbf{H}= \textbf{0}$, and since the $\textbf{A}$ is Hermitian, the first component of $\textbf{a}_1$ is purely real. From Proposition~\ref{pr:appendix_second}, \mbox{$\textbf{a}_1 \in ker(\varphi_1)$} whose dimension is $2(N-M)-1$ w.p.1. Given a choice of $\textbf{a}_1$, since $\textbf{A}$ is Hermitian, the first component of $\textbf{a}_2$ equals the conjugate of the second component of $\textbf{a}_1$, and the second component of $\textbf{a}_2$ is purely real. As a result of these restrictions and since \mbox{$\textbf{a}_2^H\textbf{H}=\textbf{0}$}, i.e., $\textbf{a}_2 \in ker(\varphi_2)$, $\textbf{a}_2$ belongs to a coset of the subspace of $\mathcal{S}$ whose dimension is \mbox{$(2(N-M)-1)-2 = 2(N-M)-3$}. Similarly, given a choice for \mbox{$\textbf{a}_1,\dots,\textbf{a}_{k-1}$}, where \mbox{$k=1,\dots,N$}, the first $k-1$ components of $\textbf{a}_k$ are fixed and the imaginary part of the $k^{th}$ component is zero. Hence, $\textbf{a}_k$ belongs to a coset of a subspace of $\mathcal{S}$ with dimension \mbox{$(2(N-M)-1)-2k$}. Thus, the dimension of $ker\left(\rho\right)$ equals 
\begin{equation*}
2(N-M)-1~+~2(N-M)-3~+~\cdots+~1 = (N-M)^2,
\end{equation*}
with probability $1$. This completes the proof.\\
$~~~~~~~~~~~~~~~~~~~~~~~~~~~~~~~~~~~~~~~~~~~~~~~~~~~~~~~~~~~~~~~~~~~~~~~~~~~~~~~~\blacksquare$

\newpage

\end{document}